\newtheorem{lemma}{Lemma}
\newtheorem{theorem}{Theorem}
\newtheorem{definition}{Definition}
\newtheorem{corollary}{Corollary}
\newtheorem{proposition}{Proposition}
\def\mE{{\mathscr E}}
\def\fH{{\mathfrak H}}
\title{On stability of determination of Riemann surface from its DN-map.}
\author{M.I.Belishev\thanks {St.Petersburg Department of Steklov Mathematical Institute, St.Petersburg, Russia,
        \newline
        e-mail: belishev@pdmi.ras.ru,
        \newline
        ORCID: 0000-0002-4759-7428;
        \newline
        supported by RFBR grant 20-01-00627-a
        \newline }.\,
        D.V.Korikov\thanks {St.Petersburg Department of Steklov Mathematical Institute, St. Petersburg, Russia,
        \newline
        e-mail: thecakeisalie@list.ru,
        \newline
        ORCID: 0000-0002-3212-5874;
        \newline
        supported by RFBR grant 20-01-00627-a
        \newline }.}
\date{}
\begin{document}
\maketitle
\begin{abstract}
Suppose that $M$ is a Riemann surface with boundary
$\partial M$,
$\Lambda$ is its DN-map, and $\mathscr E:M\to\mathbb{C}^{n}$ 
is a holomorphic immersion. Let $M'$ be diffeomorphic to $M$,
$\partial M=\partial M'$; let $\Lambda'$ be the DN map of $M'$.
Let us write $M'\in\mathbb M_t$ if
$\parallel\Lambda'-\Lambda\parallel_{H^{1}(\partial M)\to
L_{2}(\partial M)}\leqslant t$ holds. We show that, for any
holomorphic immersion $\mE: M \to \mathbb C^n$ {\rm (}$n\geqslant
1${\rm)}, the relation
\begin{equation*}
\sup_{M'\in
\mathbb{M}_{t}}\inf_{\mE'}d_{H}(\mE'(M'),\mE(M))\underset{t\to
0}{\longrightarrow}0,
\end{equation*}
holds, where $d_H$ is the Haussdorf distance in $\mathbb C^n$ and
the infimum is taken over all holomorphic immersions $\mE':
M'\mapsto\mathbb C^n$.
\end{abstract}

\noindent{\bf Key words:}\,\,\,electric impedance tomography of
surfaces, holomorphic immersions, Di\-rich\-let-to-Neumann map,
stability of determination.

\noindent{\bf MSC:}\,\,\,35R30, 46J15, 46J20, 30F15.
\bigskip

\section{Introduction}\label{sec Introduc}

\subsubsection*{Statement}

$\bullet \ $ Let $(M,g)$ be a smooth\footnote{throughout the
paper, {\it smooth} means $C^\infty$-smooth} compact
two-dimensional Riemann manifold ({\it surface}) with the smooth
boundary $(\Gamma,dl)$, $g$ is the smooth metric tensor, $\Gamma$
is diffeomorpic to a circle, and $dl$ is the length element on
$\Gamma$ induced by the metric $g$. By $\nu$ we denote the outward
normal to $\Gamma$.

The {\it Dirichlet-to-Neumann map}
(DN-map) of the surface 
acts on smooth functions on $\Gamma$ by the rule $\Lambda
f:=\partial_{\nu}u^{f}|_{\Gamma}$, where $u^{f}$ 
satisfies
 $$
\Delta_g u=0\quad\text{in}\,\,M\setminus\Gamma,\qquad
u|_{\Gamma}=f.
 $$
A specific feature of the two-dimensional case is the following.
Let the metrics $g$ and $\rho g$ be conformal equivalent via a
smooth $\rho>0$ satisfying $\rho|_\Gamma=1$; then the DN-maps of
$(M,g)$ and $(M,\rho g)$ do coincide.

The question that is known as {\it electric impedance tomography
problem }(EIT), can be posed as follows: to what extend does the
DN-map determine the surface?

\smallskip

\noindent$\bullet$\,\,\,Traditional understanding of `to solve an
inverse problem {\it completely}' includes solving several
problems:

{\bf i}\,\,\,\,\,\,to establish a relevant uniqueness of
determination,

{\bf ii}\,\,\,\,\,to provide a procedure that recovers the object
under determination,

{\bf iii}\,\,\,to provide a characteristic description of the
inverse data that ensures the solvability of the problem,

{\bf iv}\,\,\,to study a stability of determination that is to
analyze how the variations of data influent on the solution.

Problems {\bf i} and {\bf ii} were first solved in
\cite{LU}: it was shown that $\Lambda$ determines $(M,g)$ up to
conformal equivalence. Such a result is consistent with the
specific feature of the 2-dim problem EIT, which is mentioned
above. Then, in \cite{B} this result was obtained by algebraic
version of the boundary control method (BCM) \cite{B UMN}.
Recently, the it was extended to a series of 2-dim problems in
\cite{BKor_JIIPP,BKor_IP,BKor_SIAM}, where the results
corresponding to {\bf i\,-\,ii} are obtained. A characterization
relayed upon algebraic BCM is provided in \cite{BKor_CAOT}. Also,
levels {\bf i\,-\,iii} are reached in \cite{H&M} by the use of
multidimensional complex analysis.

Our paper deals with problem {\bf iv}. Its subject is a stability
of determination of the surface from its DN-map. Let the surfaces
$(M,g)$ and $(M',g')$ have the common boundary $\Gamma$. For
simplicity and with no lack of generality we assume that $g$ and
$g'$ induce on $\Gamma$ the same length element. Suppose that the
DN-map $\Lambda'$ is close to $\Lambda$. Can one claim that
$(M',g')$ is, in a certain sense, close to $(M,g)$? The rigorous
formulation of this question has to be prefaced with preliminary
discussion.

First of all, we need to provide a way of comparing the surfaces.
A variant is to consider diffeomorphisms $\kappa: \ M\to M'$ and
conformal factors obeying $\rho'|_{\Gamma}=1$, and compare the
pullback metric $g''=\kappa^{*}(\rho'g')$ with metric $g$. Then we
could say that $(M',g')$ is close to $(M,g)$ if there exist
$\kappa$ and $\rho'$ such that the metric $g''$ is close to $g$ on
$M$. Such a way looks natural but encounters the following
obstacle, which was rather unexpected for us: the surfaces of
different topology (of the Euler characteristics
$\chi(M)\not=\chi(M')$) can have arbitrarily close DN-maps.
Namely, the following is proved in \cite{ZNS}.
 \begin{proposition}
For any surface $(M,g)$ and any $\varepsilon>0$, $k,m=1,2,\dots$
there exists a surface $(M',g')$ such that
$\parallel\Lambda'-\Lambda\parallel_{H^{m}(\Gamma;\mathbb{R})\to
H^{m-1}(\Gamma;\mathbb{R})}<\varepsilon$ and $\chi(M')-\chi(M)=2k$
holds.
 \end{proposition}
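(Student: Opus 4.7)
My plan is to construct $M'$ by excising a small conformal disc from the interior of $M$ and gluing in a bordered Riemann surface of genus $k$, and then to show that the resulting DN-map perturbation is small by combining a logarithmic-capacity cutoff with interior elliptic regularity.

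\textbf{Construction.} Fix an interior point $p\in M\setminus\Gamma$ and an isothermal chart identifying a neighbourhood of $p$ with the Euclidean disc $B_R$. Pick a closed Riemann surface $S$ of genus $k$ and excise a disc $D\subset S$. For each $r\in(0,R)$, set $M_r':=(M\setminus B_r)\cup_{\partial B_r\cong\partial D}(S\setminus D)$, the identification being a smooth rescaling of boundary circles, and extend the metric smoothly across the gluing. A direct Euler-characteristic count gives $|\chi(M_r')-\chi(M)|=2k$, while the length element $dl$ on $\Gamma$ is preserved.

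\textbf{Quadratic-form estimate.} For $f\in C^\infty(\Gamma)$, let $u$ and $v_r$ be the harmonic extensions of $f$ on $M$ and $M_r'$ respectively. Choose a smooth cutoff $\eta_r$ equal to $0$ on $B_r$ and to $1$ off $B_R$ with the standard capacity bound $\int_M|\nabla\eta_r|^2\leqslant C/\log(R/r)$. Integrating $\nabla u\cdot\nabla(\eta_r^2 u)$ by parts on $M$, and using that $u$ is harmonic together with $\eta_r|_\Gamma\equiv 1$, one obtains
\[
\int_M|\nabla(\eta_r u)|^2=(\Lambda f,f)_\Gamma+\int_M u^2|\nabla\eta_r|^2\leqslant(\Lambda f,f)+C\|f\|_\infty^2/\log(R/r).
\]
Since $\eta_r u$ vanishes near $\partial B_r$, it extends by $0$ to $M_r'$, and Dirichlet's principle delivers $(\Lambda_r' f,f)\leqslant(\Lambda f,f)+C\|f\|_\infty^2/\log(R/r)$. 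The symmetric step using $\eta_r v_r$, with integration by parts on $M\setminus B_r$ (the $\partial B_r$-contribution drops because $\eta_r|_{\partial B_r}=0$), yields the reverse inequality. Polarization and the Sobolev embedding $H^{1/2+\delta}(\Gamma)\hookrightarrow L^\infty(\Gamma)$ then give the weak operator bound $\|\Lambda_r'-\Lambda\|_{H^{1/2+\delta}\to H^{-1/2-\delta}}=O(1/\log(R/r))$.

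\textbf{Upgrade to $H^m\to H^{m-1}$.} Fix a collar $U$ of $\Gamma$ in $M$ with $\overline U\cap\overline{B_R}=\varnothing$; for every $r$, $U$ embeds unchanged into $M_r'$. The difference $w_r:=u-v_r$ is harmonic on $U$ and vanishes on $\Gamma$, so interior Schauder estimates, with constants depending only on the fixed geometry of $(M,g)$ near $\Gamma$, give $\|w_r\|_{C^m(U')}\leqslant C_m\|w_r\|_{L^2(U)}$ for any inner sub-collar $U'\subset U$ and every $m$. The $L^2$-norm on $U$ is controlled by the weak bound, so $\Lambda_r'-\Lambda$ has a smooth integral kernel whose $C^m$-norm tends to zero as $r\to 0$; in particular $\|\Lambda_r'-\Lambda\|_{H^m\to H^{m-1}}\to 0$, and choosing $r$ sufficiently small achieves $<\varepsilon$. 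The key subtlety is the uniformity of the Schauder constants in $r$, secured by fixing $U$ in advance at positive distance from the perturbation region.
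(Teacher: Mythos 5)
The paper itself does not prove this proposition; it quotes it from the reference \cite{ZNS}, so there is no in-paper proof to compare against. Your overall strategy --- excise a small conformal disc, glue in a bordered genus-$k$ surface, and control the perturbation of the DN-map via the vanishing logarithmic capacity of the excised region --- is the natural one and the quadratic-form step is correct. (Two small remarks: your construction gives $\chi(M'_r)-\chi(M)=-2k$, not $+2k$ as in the paper's statement, which is presumably a sign slip in the paper since a surface with one boundary circle has $\chi\leqslant 1$; and the estimate you need for $w_r$ on the collar is a boundary, not interior, elliptic estimate, since you differentiate in the normal direction at $\Gamma$.)

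There is, however, a genuine gap at the sentence ``The $L^2$-norm on $U$ is controlled by the weak bound.'' The weak bound $\|\Lambda'_r-\Lambda\|_{H^{1/2+\delta}\to H^{-1/2-\delta}}=O(1/\log(R/r))$ controls only the Neumann data $\partial_\nu w_r|_\Gamma$. Together with $w_r|_\Gamma=0$ this is a Cauchy-data estimate for a harmonic function, and the Cauchy problem for the Laplacian is notoriously ill-posed: a small normal derivative on $\Gamma$ does not yield a useful bound on $\|w_r\|_{L^2(U)}$ without information on the inner boundary of $U$, where a priori $|w_r|$ is only bounded by $2\|f\|_\infty$ via the maximum principle. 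What actually controls $\|w_r\|_{L^2(U)}$ is an \emph{energy} estimate, which you can extract from the same ingredients via stability of Dirichlet minimizers: since $\eta_r u$ is a competitor for $v_r$ in $M'_r$ with the same trace $f$, orthogonality of the minimizer gives
\[
\int_{M'_r}|\nabla(\eta_r u-v_r)|^2=\int_{M'_r}|\nabla(\eta_r u)|^2-(\Lambda'_rf,f)\leqslant (\Lambda f,f)-(\Lambda'_rf,f)+\frac{C\|f\|_\infty^2}{\log(R/r)}\leqslant \frac{2C\|f\|_\infty^2}{\log(R/r)},
\]
using both directions of your quadratic-form estimate. Restricting to the collar $U$, where $\eta_r\equiv 1$, and applying the Poincar\'e inequality (legitimate since $w_r|_\Gamma=0$) gives $\|w_r\|_{L^2(U)}=O(\|f\|_\infty/\sqrt{\log(R/r)})$. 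With this in hand, your boundary elliptic estimate yields $\|(\Lambda'_r-\Lambda)f\|_{C^{m-1}(\Gamma)}\lesssim\|f\|_\infty/\sqrt{\log(R/r)}$, and the embeddings $H^m(\Gamma)\hookrightarrow L^\infty(\Gamma)$ and $C^{m-1}(\Gamma)\hookrightarrow H^{m-1}(\Gamma)$ give the $H^m\to H^{m-1}$ bound directly; the detour through a smooth integral kernel is unnecessary.
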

In other words, topology of the surface is not stable with respect
to small perturbations of its DN-map. This motivates to impose the
additional condition
 \begin{equation}\label{Eq chi=chi'}
\chi(M')=\chi(M)
 \end{equation}
and we accept it for the rest of the paper. However, even under
the latter condition, there is no `canonical' diffeomorphism
$\kappa: M\to M'$ that enables to compare the surfaces and
metrics, and we propose another way of comparing.

\subsubsection*{Holomorphic immersions}
The idea is to compare not the surfaces $(M,g)$ and $(M',g')$
themselves, but their images $\mE(M)$ and $\mE'(M')$ in $\mathbb
C^n$ via the close holomorphic immersions $\mE$ and $\mE'$.
\smallskip

\noindent$\bullet$\,\,\, Recall the basic notions.

Let the surface $(M,g)$ be oriented with the smooth family of
`rotations' $\Phi=\{\Phi_x\}_{x\in M}$:
\begin{align*}
\Phi_x\in{\rm End\,}TM_x,\quad \Phi_x^*=\Phi_x^{-1}=-\Phi_x,
\end{align*}
that is equivalent to
\begin{align*}
& g(\Phi_{x}a,\Phi_{x}b)=g(a,b), \quad g(\Phi_{x}a,a)=0, \qquad
\forall a,b\in TM_x, \ x\in M.
\end{align*}
Then the boundary $\Gamma$ is oriented by the tangent field
$\gamma:=\Phi\nu$. In the sequel, dealing with a set of surfaces
$(M,g), (M',g'), (M'',g''), \dots$ with the common boundary
$(\Gamma,dl)$, we always assume that they are oriented in a
consistent way: $\Phi\nu=\Phi'\nu=\Phi''\nu=\dots=\gamma$.

A smooth function $w=\Re w+i\Im w:M\to\mathbb C$ is holomorphic if
the Cauchy-Riemann condition $\nabla\Im w=\Phi\nabla\Re w$ holds
in $M$. Its real and imaginary parts are harmonic: $\Delta_g\Re
w=\Delta_g\Im w=0$ holds in $M$. By $\mathfrak H(M)\subset
C(M;\mathbb C)$ we denote the lineal of holomorphic smooth
functions on $M$. Let $${\rm Tr}: C(M;\mathbb C)\to
C(\Gamma;\mathbb C),\quad h\mapsto h|_\Gamma$$ be the trace
operator. By the maximal principle, it maps the space $\fH(M)$
onto its image ${\rm Tr\,}\fH(M)$ isometrically.

\noindent$\bullet$\,\,\,Recall that an {\it immersion} is a
differentiable map $\kappa: M\mapsto M'$ between differentiable
manifolds $M$ and $M'$, whose differential $D\kappa: T_{x}M\mapsto
T_{\kappa(x)}M'$ is injective for all $x\in M_{1}$. We say that
the immersion
$$
\mE: M\to\mathbb C^n, \,\,\,x\mapsto \{w_{1}(x),\dots,w_{n}(x)\}
 $$
is {\it holomorphic} if it is realized by holomorphic functions $w_j$.
We deal with such immersions only and, for
short, call them just `immersions'. Since the reserve of harmonic
(and, hence, holomorphic) functions on $M$ is the same for the
metrics $g$ and $\rho g$, such immersion is determined by the
conformal class of the metric and the boundary values $\eta_{j}=w_{j}|_{\Gamma}$ of $w_{j}$. Hence, by \cite{LU,B}
it is determined by the DN-map $\Lambda$ and the choice of $\eta_{j}$.
\smallskip

\subsubsection*{Main result}
Recall the definition of the Hausdorff distance in $\mathbb C^n$.
Let $U^r[A]:=\{p\in\mathbb C^n \,|\,{\rm dist}_{\mathbb
C^n}(p,A)<r\}$ be the metric neighborhood of $A\subset\mathbb
C^n$. For the bounded sets $A,B\subset\mathbb C^n$ we set
 \begin{equation}\label{Eq def d Hauss}
{\rm dist}_H (A,B)\,:=\,\max\{r_{AB}, r_{BA}\},
 \end{equation}
where $r_{AB}:=\inf\{r>0\,|\,\,U^r[A]\supset B\}$ and
$r_{BA}:=\inf\{r>0\,|\,\,U^r[B]\supset A\}$.

Recall that we deal with the surfaces with the common boundary
$\Gamma$ and the convention (\ref{Eq chi=chi'}) is in force. By
$\mathbb{M}_{t}$ we denote the set of all such surfaces $(M',g')$,
whose the DN-maps $\Lambda'$ obey
\begin{equation}
\label{DNcloseness eq}
\parallel\Lambda'-\Lambda\parallel_{H^{1}(\Gamma;\mathbb{R})\to L_{2}(\Gamma;\mathbb{R})}\leqslant t.
\end{equation}
It is known (see \cite{LeeU}) that each DN-map is a first order
pseudo-differential operator. On the class of such operators, any
two norms $\parallel\cdot\parallel_{H^{s+1}(\Gamma;\mathbb{R})\to
H^{s}(\Gamma;\mathbb{R})}$ and
$\parallel\cdot\parallel_{H^{s'+1}(\Gamma;\mathbb{R})\to
H^{s'}(\Gamma;\mathbb{R})}$ with $s,s'\in\mathbb{R}$ are
equivalent. Thus, (\ref{DNcloseness eq}) is equivalent to the
following condition
\begin{equation}
\label{DNcloseness}
\parallel\Lambda'-\Lambda\parallel_{H^{3}(\Gamma;\mathbb{R})\to H^{2}(\Gamma;\mathbb{R})}\leqslant ct.
\end{equation}
\smallskip

The main result is the following
\begin{theorem}
\label{MT}
Let the holomorphic  immersion $\mE: M \to \mathbb C^n$ {\rm (}$n\geqslant 1${\rm)} be arbitrarily fixed. Then the relation
\begin{equation}
\label{supinf main}
\sup_{M'\in \mathbb{M}_{t}}\inf_{\mE'}d_{H}(\mE'(M'),\mE(M))\underset{t\to 0}{\longrightarrow}0,
\end{equation}
holds, where the infimum is taken over all holomorphic
immersions $\mE': M'\mapsto\mathbb C^n$.
\end{theorem}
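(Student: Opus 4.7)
The plan is to construct $\mE':M'\to\mathbb{C}^{n}$ by transporting the boundary trace of $\mE$ to one compatible with $\Lambda'$, show that the two traces are close on $\Gamma$, and then lift this to Hausdorff closeness of the images via an argument-principle analysis in the target $\mathbb{C}^{n}$.

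Write $\mE=(w_1,\dots,w_n)$ with $w_j=u_j+iv_j$ holomorphic and set $\phi_j:=u_j|_\Gamma$. Since $\nabla v_j=\Phi\nabla u_j$, the relation $\partial_l v_j|_\Gamma=\Lambda\phi_j$ holds on $\Gamma$ (tangential derivative equals normal derivative after the $\Phi$-rotation). Define $v_j'\in C^\infty(\Gamma)$ by
\begin{equation*}
\partial_l v_j'=\Lambda'\phi_j,\qquad \int_\Gamma v_j'\,dl=\int_\Gamma v_j\,dl;
\end{equation*}
this is solvable because $\int_\Gamma\Lambda'\phi_j\,dl=0$ by Green's identity on $M'$. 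Let $w_j'\in\fH(M')$ be the unique holomorphic function on $M'$ with trace $\phi_j+iv_j'$, and set $\mE':=(w_1',\dots,w_n')$. From $\partial_l(v_j'-v_j)=(\Lambda'-\Lambda)\phi_j$, the norm equivalence (\ref{DNcloseness}) and Sobolev embedding on $\Gamma$, one obtains $\|\eta_j'-\eta_j\|_{C^k(\Gamma;\mathbb{C})}\le C_{\mE,k}\,t$ for every $k$, where $\eta_j=w_j|_\Gamma$ and $\eta_j'=w_j'|_\Gamma$. That $\mE'$ is an immersion for $t$ small---a technical but standard check---follows by combining boundary-derivative closeness with a Rouch\'e-type argument ensuring that the common vanishing set of the holomorphic derivatives $dw_j'/dz'$ remains empty under small perturbations of the boundary data.

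To prove $d_H(\mE'(M'),\mE(M))\to 0$ I establish $\mE(M)\subset U^\epsilon[\mE'(M')]$; the reverse inclusion is symmetric. Fix $y\in M$, $p=\mE(y)$, and (after relabelling) assume $dw_1(y)\ne 0$. Then $w_1-p_1$ has an isolated zero at $y$, so by the argument principle its winding number on $\Gamma$ about $0$ is positive. Because $w_1'|_\Gamma$ is $C$-close to $w_1|_\Gamma$, this winding number is preserved (the borderline case $p_1\in w_1(\Gamma)\cup w_1'(\Gamma)$ is absorbed into $\epsilon$ by a small perturbation), and hence there exists $x'\in M'$ with $w_1'(x')=p_1$. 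Locally straightening by the coordinate $w_1$, one has $w_j=F_j\circ w_1$ near $y$ and $w_j'=F_j'\circ w_1'$ near $x'$, with $F_j,F_j'$ holomorphic on a common disc $D\ni p_1$ whose size is stable under small perturbations. Boundary values of $F_j$ and $F_j'$ along reference arcs in $\partial D$ (images of short arcs in $\Gamma$) are controlled by $\|\eta_j'-\eta_j\|_{C(\Gamma)}$; combining this with the maximum principle applied to the holomorphic difference $F_j-F_j'$ on $D$ yields $|w_j'(x')-p_j|=|F_j'(p_1)-F_j(p_1)|\to 0$ uniformly in $y$.

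The principal difficulty is that $M$ and $M'$ carry a priori non-identified conformal structures, so interior values of holomorphic functions on them cannot be compared pointwise. The remedy is to move the comparison into $\mathbb{C}^n$: the argument principle locates preimages of common target values, while local straightening by $w_1$ converts boundary-trace closeness into closeness of interior images. Uniformity over $y\in M$ (hence over $M'\in\mathbb{M}_t$) follows from compactness of $M$, provided one secures uniform bounds on the size of the straightening discs and uniform stability of the winding-number count as $t\to 0$; establishing these uniform estimates is the technical heart of the argument.
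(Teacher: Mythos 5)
Your overall strategy---transport the boundary trace via the DN-map, then compare the images in $\mathbb C^n$ by the argument principle---is the same as the paper's, but your construction of $\mE'$ has a genuine gap, and you leave the hardest estimates unproved.

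The gap: you define $v_j'$ by $\partial_l v_j'=\Lambda'\phi_j$ (with $\phi_j=\Re\eta_j$) and then declare $w_j'\in\fH(M')$ to be ``the unique holomorphic function on $M'$ with trace $\phi_j+iv_j'$.'' Such a $w_j'$ need not exist. For $\phi+i\psi$ to be the trace of a holomorphic function on $M'$ it is not enough that $\partial_\gamma\psi=\Lambda'\phi$; one also needs $\phi\in\Re\,{\rm Tr}\,\fH(M')$, and this is a \emph{proper} subspace of $C^\infty(\Gamma;\mathbb R)$ whenever $\chi(M')<1$. Indeed, $\overline{\Re\,{\rm Tr}\,\fH(M')}={\rm Ker}[\mathbb I+(\Lambda' J)^2]\oplus\mathbb R$ has codimension $\varkappa=1-\chi(M')$ in $L_2(\Gamma;\mathbb R)$ (\cite{B}, Lemma~1): on a surface with positive genus the harmonic extension of a generic boundary function has a multi-valued conjugate, since the periods of $*du$ over the $2g$ cycles are an obstruction. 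Your $\phi_j$ lies in $\Re\,{\rm Tr}\,\fH(M)$, not in $\Re\,{\rm Tr}\,\fH(M')$, so $v_j'$ as you define it will generally be multi-valued on $M'$. The paper repairs exactly this by inserting the projection $P'$ onto $\overline{\Re\,{\rm Tr}\,\fH(M')}$ before applying $J\Lambda'$ (formula (\ref{Eq eta perturbed})), and then showing $\|P'-P\|$ is $O(t)$, which is the real content of Lemma~\ref{L1}. Without a projection or some substitute, the first step of your construction does not produce a well-defined map into $\fH(M')$, and the $C^k$-closeness estimate that everything else rests on has no object to estimate.

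Two further points you identify but do not resolve. First, your local-straightening argument requires, for each target point, a coordinate index $j$ and a disc of \emph{uniform} size on which $\pi_j$ is a bijection from $\mE(M)$; ensuring this uniformly (including near $\Gamma$, where the natural discs shrink) is precisely why the paper reduces to \emph{projective} immersions and then treats interior and near-boundary regions by separate lemmas. You acknowledge this is ``the technical heart'' but give no argument; in particular, the Cauchy-type integrals of the argument principle become singular near the contour and need the kind of splitting into $\Gamma_\delta$ and $\Gamma\setminus\Gamma_\delta$ carried out in Lemma~\ref{L last}. Second, your Rouch\'e-type claim that $\mE'$ remains an immersion presupposes control of $\partial_z w'_k$ at interior points, which itself comes from differentiating the argument-principle formula---again the content of Lemma~\ref{L2}(ii) and Lemma~\ref{L last}(ii), not a standard check one can wave at. So the proposal captures the right idea but omits both the correction that makes the map $\beta'$ well defined and the uniform near-boundary estimates that make the Hausdorff-distance conclusion legitimate.
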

The rest of the paper is devoted to the proof of Theorem \ref{MT}. First, we outline the sketch of the proof.

\noindent$\bullet$\,\,\, Let a surface $(M,g)$ and the immersion
$\mE:M\to\mathbb C^n$ be fixed. A surface $(M',g')$ with the same
boundary $(\Gamma=\Gamma', dl'=dl)$ is regarded as its
`perturbation'. To prove Theorem \ref{MT}, we construct, for each
$M'$, a certain map $\mE':M'\to\mathbb C^n,\,x\mapsto
\{w'_{1}(x),\dots,w'_{n}(x)\}$, which is determined by $M,\mE,M'$
and obeys
\begin{equation}\label{main result} \sup_{M'\in
\mathbb{M}_{t}}d_{H}(\mE'(M'),\mE(M))\underset{t\to
0}{\longrightarrow}0.
\end{equation}
The map $\mE'$ is connected with $\mE$ via the map $\beta'$:
\begin{equation}
\label{rule} w'_j=\beta'w_j,\,\,\,j=1,\dots, n,
\end{equation}
where $\beta': \fH(M)\to\fH(M')$ is a `canonical' real linear map
obeying
\begin{equation}\label{bcest}
\parallel{\rm Tr\,}'\beta'w-{\rm Tr\,}w\parallel_{C^{2}(\Gamma;\mathbb{C})}\leqslant c\,t\parallel{\rm Tr\,}w\parallel_{H^{3}(\Gamma;\mathbb{C})}, \qquad
w\in\fH(M);
\end{equation}
here and in the sequel, $c$'s are positive constants determined by
$M$ and $\mE$. As we show, for small enough $t$ and {\it all}
$M'\in\mathbb{M}_{t}$, the map $\mE'$ turns out to be an
immersion.

In what follows, we give a detailed description of the map
$\beta'$ along with the proof of (\ref{bcest}). In the subsequent,
we denote $\eta_{k}=w_{k}|_{\Gamma}$,
$\eta'_{k}=w'_{k}|_{\Gamma}$, so that $\mE$ and $\mE'$ are
determined by $\{\eta_{k}\}_{k=1}^{n}$ and
$\{\eta'_{k}\}_{k=1}^{n}$ by the uniqueness of analytic
continuation. Note that (\ref{bcest}) implies the closeness of the
boundary traces of $w'_{k}$ and $w_{k}$ for small $t$; namely, we
have
\begin{equation}\label{boundary curves estimate gen}
\parallel\eta'_{k}-\eta_{k}\parallel_{C^{2}(\Gamma;\mathbb{C})}\leqslant c\,t, \qquad k=1,\dots,n.
\end{equation}
In particular, the boundary images $\mE(\Gamma)$ and $\mE'(\Gamma)$ are close in $\mathbb C^n$ for small $t$.

\noindent$\bullet$\,\,\, Next, we prove that the estimate
(\ref{boundary curves estimate gen}) implies the closeness, by
Hausdorff distance, of the images $\mE'(M')$ and $\mE(M)$ for
small $t$. The key instrument of the proof is the {\it generalized
argument principle}. Suppose that ${w}$ and $\tilde w$ are
holomorphic smooth functions on $M$ and $z\not\in w(\Gamma)$ is
fulfilled. Then, from Stokes theorem (see Theorem 3.16, \cite{M})
and the residue theorem (see Lemma 3.12, \cite{M}) for meromorphic
1-form $(\tilde w/({w}-z))d{w}$, it follows that
\begin{equation}
\label{GAP}
\frac{1}{2\pi i}\int\limits_{\Gamma}\tilde
w\,\frac{\partial_{\gamma}{w}}{{w}-z}\,dl=\sum_{x\in {w}^{-1}(z)}
m({w},x,z)\,\tilde w(x)
\end{equation}
holds, where $m({w},x,z)$ is the order of the zero $x$ of
the function $w-z$. If $w-z$ has a unique zero $x$ on $M$ and its order is one, then the
formula above is simplified as follows
\begin{equation}
\label{AP}
\frac{1}{2\pi i}\int\limits_{\Gamma}\tilde
w\,\frac{\partial_{\gamma}{w}}{{w}-z}\,dl=\tilde w(x),
\end{equation}
so we can find the values at $x$ of all holomorphic functions
$\tilde{w}$. Of course, the same facts is true for $M'$ instead of
$M$. So, we can try to take one of the projections $w_{j}(x)$
($w'_{j}(x')$) as a coordinate of the point $p=\mE(x)$
($p'=\mE'(x')$) and determine all other projections $w_{k}(x)$
($w'_{k}(x')$) via formula (\ref{AP}). Thereby, we determine the
images $\mE(M)$ and $\mE'(M')$. The class of {\it projective}
immersions, for which it is possible, is described below. Also, it
is shown that it suffices to prove (\ref{main result}) only for
such projective immersions.

Due to (\ref{AP}), the closeness of $\eta'_{k}$ and $\eta_{k}$ implies the closeness of
the surface images $\mE(M)$ and $\mE'(M')$ in $\mathbb C^n$ outside a small neighborhood of $\mE(\Gamma)$. Near $\mE(\Gamma)$,
the application of the generalized argument
principle is reduced to the standard estimates of the Cauchy-type
integrals by choosing appropriate local coordinates for points
$\mE(M)$ and $\mE'(M')$. Summarizing, we will arrive at (\ref{main result}).

\section{The map $\beta'$}
\subsubsection*{Preliminaries}
\noindent$\bullet$\,\,\,The following is a few known facts. Introduce the sets of functions with the zero mean value
\begin{align*}
\dot L_2&(\Gamma,\mathbb R):=\{f\in L_2(\Gamma,\mathbb
R)\,|\,\,\langle f\rangle=0\}, \quad \langle f\rangle:=\int_\Gamma
f\,dl,\\
\dot C^\infty&(\Gamma;\mathbb R):= \dot L_2(\Gamma;\mathbb R)\cap
C^\infty(\Gamma;\mathbb R),\\
\dot H^m&(\Gamma;\mathbb R):= \dot L_2(\Gamma;\mathbb R)\cap
H^m(\Gamma;\mathbb R),
\end{align*}
where $H^m(...)$ are the Sobolev spaces. As is known, the DN map
$\Lambda$ is well defined on smooth functions and acts
continuously from $H^m(\Gamma,\mathbb R)$ to
$H^{m-1}(\Gamma,\mathbb R),\,\,\,m=1,2,\dots$. It preserves the
smoothness, its (operator) closure is a self-adjoint first-order
pseudo-differential operator in $L_2(\Gamma;\mathbb R)$, and
 \begin{equation}\label{Eq Lambda properties}
{\rm Ker\,}\Lambda=\{\rm const\},\quad\,\,\overline{{\rm
Ran\,}\Lambda}=\dot L_2(\Gamma,\mathbb R)
 \end{equation}
holds.

Let $\partial_\gamma$ be the differentiation on the boundary with
respect to the field $\gamma=\Phi\nu$. The integration
$J=\partial_\gamma^{-1}$ is well defined in $\dot
L_2(\Gamma,\mathbb R)$, whereas the operator $J\Lambda$ is also
well defined due to (\ref{Eq Lambda properties}) and is bounded in
each $H^m(\Gamma;\mathbb R),\,\,m=0,1,\dots$.
Note that if $M$ is a disk in
$\mathbb R^2$, the operator $J\Lambda$ coincides with the
classical Hilbert transform that relates the real and imaginary
parts of the trace of holomorphic function. In the general case,
for $w\in\fH(M)$ and $\eta:=w|_\Gamma={\rm Tr\,}w$ we also have
 \begin{equation}\label{Eq eta unperturbed}
\eta = \Re \eta+i\,[J\Lambda \Re \eta+\langle\Im \eta\rangle]
 \end{equation}
(see \cite{B}).

\noindent$\bullet$\,\,\,By (\ref{Eq Lambda properties}), $\Lambda
J$ is a well defined on $\dot C^\infty(\Gamma;\mathbb R)$ and
bounded operator acting in $\dot L_2(\Gamma;\mathbb R)$. It
preserves smoothness: $\Lambda J\,\dot H^m(\Gamma;\mathbb R)\subset
\dot H^m(\Gamma;\mathbb R)$ holds for all $m=0,1,\dots$. The
representation $(\Lambda J)^*=-J\Lambda=J(\Lambda
J)\partial_\gamma$ is valid on smooth functions and implies
 \begin{equation}\label{Eq L1 1}
[\,\mathbb I+\left(\Lambda J\right)^2]^*=\mathbb I+\left((\Lambda
J)^*\right)^2=\mathbb I+J\left(\Lambda
J\right)^2\partial_\gamma=J\,[\,\mathbb I+\left(\Lambda
J\right)^2]\partial_\gamma\qquad {\rm on}\,\,\,\dot
C^\infty(\Gamma;\mathbb R),
 \end{equation}
where $\mathbb I=\partial_\gamma J$ is the unit operator in $\dot
L_2(\Gamma;\mathbb R)$.

As is shown in \cite{B}, Lemma 1, the relations
\begin{align*}
{\rm Ran\,}[\,\mathbb I+\left(\Lambda
J\right)^2]\subset \dot C^\infty(\Gamma;\mathbb R),\\
{\rm dim\,Ran\,}[\,\mathbb I+\left(\Lambda
J\right)^2]=1-\chi(M)=:\varkappa
\end{align*}
hold; hence, the operator $\mathbb I+\left(\Lambda
J\right)^2$ determines the topology of $M$. By (\ref{Eq L1 1}), these facts
easily lead to
 \begin{equation*}
{\rm dim\,Ran\,}[\,\mathbb I+\left(\Lambda
J\right)^2]^*=\varkappa,
 \end{equation*}
so that $[\,\mathbb I+\left(\Lambda J\right)^2]^*$ is a finite
rank operator in $\dot L_2(\Gamma;\mathbb R)$. The relations
 \begin{equation}\label{Eq L1 3}
 \begin{split}
\overline{\Re {\rm Tr\,}\fH(M)}={\rm Ker\,}[\,\mathbb
I+\left(\Lambda J\right)^2]\oplus \mathbb R,\\
\overline{\Re {\rm Tr\,}\fH(M)}\cap C^\infty(\Gamma;\mathbb R)=\Re {\rm Tr\,}\fH(M).
\end{split}
 \end{equation}
are also proved in \cite{B}. As a consequence, we have
 \begin{align}
\notag & L_2(\Gamma;\mathbb R)=\dot L_2(\Gamma;\mathbb
R)\oplus\mathbb R={\rm Ker\,}[\,\mathbb I+\left(\Lambda
J\right)^2]\oplus{\rm Ran\,}[\,\mathbb I+\left(\Lambda
J\right)^2]^*\oplus\mathbb R\overset{(\ref{Eq L1 3})}=\\
\label{Eq L1 4} &= \overline{\Re {\rm Tr\,}\fH(M)}\oplus{\rm
Ran\,}[\,\mathbb I+\left(\Lambda J\right)^2]^*.
\end{align}
Let $P$ be the projection in $L_2(\Gamma;\mathbb R)$ onto the
subspace $\overline{\Re {\rm Tr\,}\fH(M)}$. Then, for any $\eta\in C^\infty(\Gamma;\mathbb R)$,
we have $P\eta\in\Re {\rm Tr\,}\fH(M)$. Of course, the same facts are true for the surface $M'$ instead of $M$.

\subsubsection*{Operators $\beta'_{\Gamma}$ and $\beta'$}
Let $P'$ be the projection in $L_2(\Gamma;\mathbb R)$ onto the
subspace $\overline{\Re {\rm Tr\,}\fH(M')}$. The map
 \begin{equation}\label{Eq eta perturbed}
\beta'_{\Gamma}\eta:=P'\Re\eta+i\,[J\Lambda'P'\Re\eta+\langle\Im\eta\rangle],
\qquad \eta\in C^{\infty}(\Gamma;\mathbb{C})    
 \end{equation}
provides a natural way to relate the traces of $\fH(M)$ with the
traces of $\fH(M')$. In the meantime, the map
 \begin{equation*}
\beta'\,:=\,({\rm Tr\,}')^{-1}\beta'_\Gamma\,{\rm Tr\,},
 \end{equation*}
where ${\rm Tr\,}'$ is the trace operator on $\fH(M')$, is well
defined and relates $\fH(M)$ with $\fH(M')$. Now, we prove estimate (\ref{bcest}).
\begin{lemma}\label{L1}
For sufficiently small $t>0$ and for any $M'\in\mathbb{M}_{t}$,
the map $\beta'_\Gamma$ is a bijection and the inequality
\begin{equation}\label{boundary curves estimate V}
\parallel\beta'_{\Gamma}\eta-\eta\parallel_{C^{2}(\Gamma;\mathbb{C})}\leqslant c\,t\parallel\eta\parallel_{H^{3}(\Gamma;\mathbb{C})}, \qquad \eta\in {\rm Tr\,}\fH(M)
\end{equation}
holds with a constant $c$ independent of $M'$.
\end{lemma}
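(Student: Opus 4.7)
The plan is to decompose $\beta'_\Gamma\eta - \eta$ into easily estimable pieces. Using that $P\Re\eta = \Re\eta$ for $\eta \in {\rm Tr\,}\fH(M)$, the definition of $\beta'_\Gamma$ yields
$$
\beta'_\Gamma\eta - \eta = (P' - P)\Re\eta + iJ\bigl[\Lambda'(P' - P)\Re\eta + (\Lambda' - \Lambda)\Re\eta\bigr].
$$
The direct DN-difference term $(\Lambda' - \Lambda)\Re\eta$ is bounded in $H^2$ by $ct\|\eta\|_{H^3}$ via (\ref{DNcloseness}), so after applying $J$ (which gains one derivative) and invoking the one-dimensional Sobolev embedding $H^3 \hookrightarrow C^2$, this term contributes $O(t)\|\eta\|_{H^3}$ in $C^2$. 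The substantive task is to control the projection difference $(P' - P)\Re\eta$ in $C^2$.

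Set $A := \mathbb I + (\Lambda J)^2$ and $A' := \mathbb I + (\Lambda' J)^2$ on $\dot L_2(\Gamma;\mathbb R)$, and $u := \Re\eta - |\Gamma|^{-1}\langle\Re\eta\rangle$. Since $\Re\eta \in \Re{\rm Tr\,}\fH(M)\subset{\rm Ker\,}A\oplus\mathbb R$, we have $u \in {\rm Ker\,}A$, and the orthogonality of $P'$ implies $\langle P'u\rangle = 0$, so $v := (P - P')\Re\eta = u - P'u = (I - P')u$ lies in the $\varkappa$-dimensional subspace ${\rm Ran\,}A'^{*} = ({\rm Ker\,}A')^\perp$. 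Because $P'u \in {\rm Ker\,}A'$, one has $A'v = A'u = (A' - A)u$, and the telescoping identity
$$
A' - A = (\Lambda'J)(\Lambda' - \Lambda)J + (\Lambda' - \Lambda)J(\Lambda J),
$$
combined with the equivalence of pseudo-differential norms noted after (\ref{DNcloseness eq}) and with the $H^s$-boundedness of $\Lambda J$ and $\Lambda' J$ (uniform for $M' \in \mathbb M_t$, for small $t$), yields $\|(A' - A)u\|_{H^2} \le ct\|u\|_{H^3}$.

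To extract $v$ from $A'v$, one inverts the bijection $A'|_{({\rm Ker\,}A')^\perp}: ({\rm Ker\,}A')^\perp \to {\rm Ran\,}A'$. The crux is that, for $t$ sufficiently small and all $M' \in \mathbb M_t$, this inverse is bounded from $H^2$ to $H^3$ by a constant independent of $M'$. Such a uniform bound follows from perturbation theory for finite-rank operators: under the standing assumption $\chi(M') = \chi(M)$ the rank of $A'$ is the constant $\varkappa$; the convergence $A' \to A$ in operator norm forces the gap between ${\rm Ran\,}A'^*$ and ${\rm Ran\,}A^*$ to zero; hence $A'|_{({\rm Ker\,}A')^\perp}$ is a small perturbation of the invertible map $A|_{({\rm Ker\,}A)^\perp}$. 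Applying the resulting uniform bound gives $\|v\|_{H^3} \le ct\|u\|_{H^3}$, whence $\|v\|_{C^2} + \|J\Lambda'v\|_{C^2} \le ct\|\eta\|_{H^3}$; reassembling the three pieces yields (\ref{boundary curves estimate V}).

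Bijectivity of $\beta'_\Gamma : {\rm Tr\,}\fH(M) \to {\rm Tr\,}\fH(M')$ is then a consequence of the estimate. The map lands in ${\rm Tr\,}\fH(M')$ because $P'\Re\eta \in \Re{\rm Tr\,}\fH(M')$ by (\ref{Eq L1 3}) and the formula (\ref{Eq eta unperturbed}) applied to $M'$ recovers the holomorphic trace from its real part and imaginary mean. Injectivity reduces to ${\rm Ran\,}A'^{*} \cap {\rm Ker\,}A = \{0\}$, which holds for small $t$ because ${\rm Ran\,}A'^{*}$ is close to $({\rm Ker\,}A)^\perp$ in the gap metric; surjectivity follows from the matching codimensions $\varkappa$ of $\Re{\rm Tr\,}\fH(M)$ and $\Re{\rm Tr\,}\fH(M')$ in $\dot L_2\oplus\mathbb R$ together with a Fredholm-index argument. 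The main obstacle throughout is the uniform invertibility of $A'|_{({\rm Ker\,}A')^\perp}$ over the family $M'\in\mathbb M_t$, which is precisely where the smallness of $t$ and the topological hypothesis (\ref{Eq chi=chi'}) become essential.
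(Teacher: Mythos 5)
Your decomposition of $\beta'_\Gamma\eta-\eta$ into the three pieces and the telescoping identity for $A'-A$ are identical to the paper's, but you handle the projection difference by a genuinely different mechanism. The paper fixes smooth functions $f_1,\dots,f_\varkappa$, sets $h_j:=J\,A\,\partial_\gamma f_j$ and $h'_j:=J\,A'\,\partial_\gamma f_j$ as explicit spanning sets for ${\rm Ran\,}A^*$ and ${\rm Ran\,}A'^*$, derives $\|h'_j-h_j\|_{H^3}\leqslant c\,t$ from the same telescoping identity you use, and then concludes $\|Q'-Q\|_{H^3\to H^3}\leqslant c\,t$, hence $\|P'-P\|_{H^3\to H^3}\leqslant c\,t$, so $(P'-P)\Re\eta$ is estimated directly by operator norm. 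You instead observe that $v:=(P-P')\Re\eta$ solves $A'v=(A'-A)u$ with $v\in{\rm Ran\,}A'^*$, and invoke a uniform bound on the reduced inverse $\bigl(A'|_{({\rm Ker\,}A')^\perp}\bigr)^{-1}$. This is a legitimate alternative, and you correctly identify where (\ref{Eq chi=chi'}) enters (constant rank $\varkappa$). The one place that needs sharpening is the assertion that this reduced inverse is uniformly bounded from $H^2$ to $H^3$. Constant-rank perturbation theory in $L_2$ gives a uniform lower bound on the smallest nonzero singular value, hence a uniform $L_2\to L_2$ bound on the reduced inverse; but to upgrade to $H^3$ one must know that the unit ball of ${\rm Ran\,}A'^*$ in $L_2$ has uniformly bounded $H^3$ norm across $M'\in\mathbb M_t$ — gap convergence in $L_2$ alone does not give this. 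The natural way to supply that control is to exhibit a uniformly $H^3$-bounded and uniformly independent spanning set of ${\rm Ran\,}A'^*$, which is precisely the paper's $h'_j=J\,A'\,\partial_\gamma f_j$. So your route is correct in outline but, at the crux you single out, ultimately leans on the same concrete construction the paper makes explicit; the paper's version is the more economical presentation because it avoids passing through the reduced-inverse abstraction. Your bijectivity discussion (range identification via (\ref{Eq L1 3}) and (\ref{Eq eta unperturbed}) applied to $M'$, injectivity via triviality of ${\rm Ran\,}A'^*\cap{\rm Ker\,}A$) is fine and roughly at the same level of detail as the paper's ``close to identity, hence invertible.''
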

\begin{proof}
Let $Q$ be the (finite rank) projection in $L_2(\Gamma;\mathbb R)$
onto ${\rm Ran\,}[\,\mathbb I+\left(\Lambda
J\right)^2]^*\overset{(\ref{Eq L1 1})}=J\,[\,\mathbb
I+\left(\Lambda J\right)^2]\partial_\gamma C^{\infty}(\Gamma;\mathbb R)$. Choose the smooth
$f_1,\dots,f_\varkappa$ so that $h_j:=J\,[\,\mathbb
I+\left(\Lambda J\right)^2]\partial_\gamma
f_j,\,\,\,j=1,\dots,\varkappa$ constitute a basis in ${\rm
Ran\,}[\,\mathbb I+\left(\Lambda J\right)^2]^*={\rm Ran\,}Q$. Repeat the same construction for the projection $Q'$ in
$L_2(\Gamma;\mathbb R)$ onto ${\rm Ran\,}[\,\mathbb
I+\left(\Lambda' J\right)^2]^*=J\,[\,\mathbb
I+\left(\Lambda' J\right)^2]\partial_\gamma C^{\infty}(\Gamma;\mathbb R)$, and put
$h'_j:=J\,[\,\mathbb I+\left(\Lambda' J\right)^2]\partial_\gamma
f_j,\,\,\,j=1,\dots,\varkappa$ (with the same $f_j$). Representing
 \begin{align*}
& h'_j-h_j=\left\{J\,[\,\mathbb I+\left(\Lambda'
J\right)^2]\partial_\gamma-J\,[\,\mathbb I+\left(\Lambda
J\right)^2]\partial_\gamma\right\}f_j=\\
& = J\left[\Lambda'J(\Lambda'J-\Lambda J)+(\Lambda'J-\Lambda
J)\Lambda
J\right]\,\partial_\gamma\,f_j=J\left[(\Lambda'J)(\Lambda'-\Lambda)+(\Lambda'-\Lambda
)(J\Lambda)\right]\,f_j
 \end{align*}
and taking into account the boundedness of $J, \Lambda'J, \Lambda
J$ as operators in $H^m(\Gamma;\mathbb R)$ and $\Lambda', \Lambda$
as operators from $H^m(\Gamma;\mathbb R)$ to
$H^{m-1}(\Gamma;\mathbb R)$, one easily gets
 $$
\| h'_j-h_j\|_{H^m(\Gamma;\mathbb R)}\leqslant
c_m\,\|\Lambda'-\Lambda\|_{H^m(\Gamma;\mathbb R)\to
H^{m-1}(\Gamma;\mathbb R)}\,\|f_j\|_{H^m(\Gamma;\mathbb R)}.
 $$
Hence, by virtue of (\ref{DNcloseness}), the inequality
 \begin{equation}\label{Eq L1 4}
\|h'_j-h_j\|_{H^3(\Gamma;\mathbb R)}\leqslant
c\,t\,\|f_j\|_{H^3(\Gamma;\mathbb R)},\qquad j=1,\dots,\varkappa
 \end{equation}
holds with a constant $c$ independent on $M'\in\mathbb{M}_{t}$.

By the latter estimate and assumption (\ref{Eq chi=chi'}), for  small enough $t$, the system
$h'_1,\dots,h'_\varkappa$ forms a basis in its linear span that is
${\rm Ran\,}[\,\mathbb I+\left(\Lambda' J\right)^2]^*={\rm
Ran\,}Q'$. Closeness of the bases $h'_1,\dots,h'_\varkappa$ and
$h_1,\dots,h_\varkappa$ leads to the closeness of the projections:
by the use of (\ref{Eq L1 4}) it is easy to verify that the
inequality
 \begin{equation}\label{Eq L1 5}
\|Q'-Q\|_{H^3(\Gamma;\mathbb R)\to H^3(\Gamma;\mathbb R)}\leqslant
c\,t
 \end{equation}
holds. Comparing (\ref{Eq L1 3}) and (\ref{Eq L1 4}), we obtain $P=\mathbb I-Q$ and $P'=\mathbb I-Q'$.
Then (\ref{Eq L1 5}) is equivalent to
 \begin{equation}\label{Eq L1 5 P}
\|P'-P\|_{H^3(\Gamma;\mathbb R)\to H^3(\Gamma;\mathbb
R)}\leqslant {\rm const}\,t.
 \end{equation}
Fix a smooth $\eta\in{\rm Tr\,}\Re\fH(M)$; then $\Re\eta=P\Re\eta$. Comparing (\ref{Eq eta
unperturbed}) with (\ref{Eq eta perturbed}), we have
 \begin{align*}
&
\beta'_\Gamma\eta-\eta=(P'-P)\,\Re\eta+i\,\left[J(\Lambda'P'-\Lambda
P)\right]\,\Re\eta\,=\\
& =
(P'-P)\,\Re\eta+i\,\left[(J\Lambda')(P'-P)+J(\Lambda'-\Lambda)P\right]\,\Re\eta.
 \end{align*}
From (\ref{Eq L1 5 P}) and (\ref{DNcloseness}), we easily obtain that
$$
\|\beta'_\Gamma\eta-\eta\|_{H^3(\Gamma;\mathbb R)}\leqslant
c\,t\,\|\eta\|_{H^3(\Gamma;\mathbb R)}
 $$
holds with a constant $c$ independent on $M'\in\mathbb{M}_{t}$. At last, by
continuity of the embedding $H^3(\Gamma;\mathbb R)\subset
C^2(\Gamma;\mathbb R)$, we arrive at (\ref{boundary curves
estimate V}). By (\ref{boundary curves
estimate V}), for small $t$, $\beta'_{\Gamma}$ is a (close to identity)
invertible map from ${\rm Tr\,}\fH(M)$ to ${\rm Tr\,}\fH(M')$,
which preserves the smoothness. As a result, $\beta'_{\Gamma}$
is an invertible map from $\fH(M)$ to $\fH(M')$. 
\end{proof}
As a corollary of (\ref{boundary curves estimate V}), we obtain (\ref{boundary curves estimate gen}) and
\begin{align}\label{Eq Close bound images}
\|\mE'(x)-\mE(x)\|_{\mathbb C^n}\leqslant c\,t,\qquad x\in\Gamma,
\end{align}
where $\mE'$ is the map given by (\ref{rule}). So, the smallness
of $\Lambda'-\Lambda$ yields the closeness of the images of the
(common) boundary $\Gamma$ via the holomorphic maps $\mE$
and $\mE'$. It is the fact that motivates the further use of the
pair $\mE,\mE'$ for evaluation of the closeness of the surfaces
$M$ and $M'$.

\section{Closeness outside the boundary}
\label{Sec2}

We now proceed to the proof of (\ref{main result}). In
accordance with the plan outlined in Introduction, we establish
the closeness of the compact set $K\subset \mE(M)$ and 
a suitable $K'\subset \mE(M')$, which are located outside
$\mE(\Gamma)$ and $\mE(\Gamma')$ respectively.

\subsubsection*{Projective immersions}

$\bullet \ $ First, we show that it is sufficient to prove
(\ref{main result}) only for $\mE$ belonging to a special class of
immersions. Such a class is defined in the following way.

Let $\xi=\{\xi_{1},\dots,\xi_{n}\}\in \mathbb C^n$; by
$\pi_j:\xi\mapsto \xi_j$ we denote the coordinate projection. Let
$B_{r,j}[\xi]:=\{\zeta\in \mathbb C^n \,|\,
|\zeta_{j}-\xi_{j}|<r\}$ be a cylinder in $\mathbb C^n$.
\begin{definition}
\label{project emb} The immersion $\mE: \ M\to\mathbb{C}^{n}$ is
called projective if for each point $\xi\in \mE(M)$ there is a
cylinder $B_{r,j}[\xi]$ such that the projection $\pi_{j}: \
\mE(M)\cap B_{r,j}[\xi]\to \mathbb{C}$ is a diffeomorphism.
\end{definition}
Of course, the projective immersion is an embedding. Also,
Definition \ref{project emb} provides the following property of
projective immersions: if $z\in\pi_{j}(\mE(M)\cap B_{r,j}[\xi])$,
then the coordinate plane $\{\zeta\in\mathbb{C}^{n} \ | \
\zeta_{j}=z\}$ crosses $\mE(M)$ at a single point.

\begin{proposition}
The projective immersions do exist.
 \end{proposition}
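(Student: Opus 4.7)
Plan of proof. I will construct a projective immersion by assembling $\mE$ from a finite collection of holomorphic functions, each of which acts as a biholomorphism onto a small disk in $\mathbb C$ near a prescribed point of $M$. The key analytic input is the following: for every point $x_0\in M$ there exists a function $f_{x_0}\in\fH(M)$ having a simple zero at $x_0$ and no other zeros on $M$. This is a standard consequence of the theory of bordered Riemann surfaces; one way to obtain it is to pass to the Schottky double $\tilde M$ (a closed Riemann surface of genus $2g+k-1$) and apply Riemann--Roch to produce a meromorphic function on $\tilde M$ with simple zero at $x_0$ and all poles confined to the mirror half $\tilde M\setminus\bar M$, removing any spurious additional zeros by a perturbation inside the resulting linear system or by a Runge-type approximation of the local germ $z-x_0$.

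Given such $f_{x_0}$, the condition $df_{x_0}(x_0)\neq 0$ combined with the fact that $|f_{x_0}|$ is bounded below by a positive constant on any compact subset of $M\setminus V$ (where $V$ is a small local chart around $x_0$) yields $r_0=r_0(x_0)>0$ such that the preimage $U_{x_0}:=f_{x_0}^{-1}(D_{r_0}(0))$ is contained in $V$, and $f_{x_0}\colon U_{x_0}\to D_{r_0}(0)$ is a biholomorphism. The family $\{U_{x_0}\}_{x_0\in M}$ is an open cover of $M$; by compactness I extract a finite subcover $U_{x_1},\dots,U_{x_N}$ and define
\[
\mE\,:=\,(f_{x_1},\dots,f_{x_N})\colon M\to\mathbb C^{N}.
\]

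To verify Definition~\ref{project emb}, fix any $x\in M$ and pick $j$ with $x\in U_{x_j}$: then $df_{x_j}(x)\neq 0$, so $\mE$ is a holomorphic immersion at $x$, and $x$ is the \emph{unique} preimage of $f_{x_j}(x)$ in $M$ because $f_{x_j}\colon U_{x_j}=f_{x_j}^{-1}(D_{r_0}(0))\to D_{r_0}(0)$ is a bijection. This forces $\mE$ to be injective, hence an embedding. For $\xi=\mE(x)$ and any $r<r_{0}-|f_{x_j}(x)|$, one has $f_{x_j}^{-1}(D_r(\xi_j))\subset U_{x_j}$, so
\[
\mE(M)\cap B_{r,j}[\xi]=\mE\bigl(f_{x_j}^{-1}(D_r(\xi_j))\bigr),
\]
and on this set $\pi_j$ coincides with $f_{x_j}\circ\mE^{-1}$, which is a diffeomorphism onto $D_r(\xi_j)$. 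The only genuinely delicate step is the analytic input, namely producing $f_{x_0}$ whose zero set in $M$ equals $\{x_0\}$: Riemann--Roch on the Schottky double readily gives the prescribed simple zero, but typically leaves additional zeros inside $M$ that must be eliminated by a genericity/perturbation argument inside the linear system or by Runge-type approximation of the local germ $z-x_0$ by a globally defined function in $\fH(M)$.
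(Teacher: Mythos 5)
Your overall plan coincides with the paper's: for each $x_{0}\in M$ produce a holomorphic function with a \emph{single} simple zero at $x_{0}$ and no other zeros in $M$, take the sublevel set $U_{x_{0}}$ on which it is a coordinate, extract a finite subcover, and assemble the coordinate functions into $\mE$; the verification of projectivity at the end is also the same in spirit. The place where your argument has a genuine gap is in the step you yourself flag as delicate: obtaining $f_{x_{0}}$.

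The paper's route is to observe that $M$ embeds into a \emph{noncompact} Riemann surface $\tilde M$ (just extend a collar past $\Gamma$, not the compact Schottky double) and to invoke the divisor theorem for noncompact Riemann surfaces (Forster, Proposition~26.5): on a noncompact surface every divisor is principal, so there is a holomorphic function on $\tilde M$ whose divisor is exactly $1\cdot x_{0}$, and its restriction to $M$ is the required $f_{x_{0}}$ with no further work. Your primary route — the Schottky double plus Riemann--Roch — does not produce such a function by itself. On a compact surface a meromorphic function has as many zeros as poles; Riemann--Roch gives an $f\in L(D-x_{0})\setminus L(D-2x_{0})$ for a pole divisor $D$ supported in the mirror half, but says nothing about where the remaining $\deg D-1$ zeros lie, and they will in general land inside $\bar M$. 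The assertion that these can be "eliminated by a perturbation inside the linear system" is precisely the nontrivial claim that would need a proof; zeros of functions in a fixed linear system on a compact surface are constrained by the Abel--Jacobi relation, and no simple genericity argument gets you that they all avoid $\bar M$. Your secondary suggestion — Runge-type approximation of the local germ — is essentially what is behind Forster~26.5 (the Weierstrass theorem on a noncompact surface is proved by Runge approximation), so if you adopt the noncompact extension point of view and quote that theorem directly, the gap closes and your proof matches the paper's.

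Two small additional remarks. First, it is cleaner to work with $\tilde U_{x_{0}}=\{y\in U_{x_{0}}:|f_{x_{0}}(y)|<r_{0}/2\}$ as the paper does rather than claiming $f_{x_{0}}\colon U_{x_{0}}\to D_{r_{0}}(0)$ is a biholomorphism onto the full disk: when $x_{0}$ lies on $\Gamma$ (which the definition of projectivity requires you to cover), the image is only a half-disk-like region, so "biholomorphism onto its image" is the correct statement. Second, the injectivity/embedding observation you add is correct and is also noted in the paper, but it is not needed for Definition~\ref{project emb} itself.
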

\begin{proof}
Due to smoothness of the boundary $\Gamma$ the surface $M$ can be
embedded into a larger noncompact surface $\tilde{M}$. In view of
divisor theorem (see Proposition 26.5, \cite{Forster}), for any
$x\in M$ there exists a holomorphic function $\tilde{w}_{x}$ on
$\tilde{M}$ such that $x$ is a single zero of $\tilde{w}_{x}$ and
the differential $d\tilde{w}_{x}$ at $x$ is injective. Then there
exists a neighborhood $U_{x}$ of $x$ in ${M}$ such that
$\tilde{w}_{x}: \ U_{x}\to \mathbb{C}$ is an injection and the
differential $d\tilde{w}_{x}$ at any point of $U_{x}$ is
injective. The set $M\backslash U_{x}$ is compact, so that
$|\tilde{w}_{x}|\ge r_{x}>0$ holds on $M\backslash U_{x}$.

Let $\tilde U_{x}:=\{y\in U_{x} \ | \ |\tilde{w}_{x}|<r_{x}/2\}$.
The neighbourhoods $\tilde U_{x}$, $x\in M$ constitute an open
cover of $M$, from which one can choose a finite subcover $\tilde
U_{x_{j}}$, $j=1,\dots,n$. Taking $\mE: M\to\mathbb
C^n,\,\,\mE=\{w_1(\cdot),\dots,w_n(\cdot)\}$, $w_{j}=\tilde
w_{x_{j}}|_{M}$, we obtain the immersion, which is projective by
construction.
\end{proof}
Let's note a simple fact. If $\mE_{1}: \ M\to\mathbb{C}^{n_{1}}$
and $\mE_{2}: \ M\to\mathbb{C}^{n_{2}}$ are immersions, whereas
$\mE_{2}$ is projective, then the immersion
 $$
\mE_{1}\oplus\mE_{2}: \ M\to\mathbb{C}^{n_{1}+n_{2}}, \qquad
\left(\mE_{1}\oplus\mE_{2}\right)(x):= \{\mE_{1}(x),\mE_{2}(x)\}
 $$
is also projective. It is used as follows.
\smallskip

\noindent$\bullet \ $ Proving estimate (\ref{main result}), we can
restrict ourselves only to the case of projective immersions.
Indeed, assume that (\ref{main result}) is already proved for any
$n=1,2,\dots,$ and any projective immersion $\mE_{0}$. Let $\mE:\
M\to \mathbb{C}^{m}$ be an immersion of $M$ and let $M'\in
\mathbb{M}_{t}$ be chosen arbitrarily. Take some projective
immersion $\mE_{0}: \ M\to \mathbb{C}^{s}$ of $M$. Then $\mE\oplus
\mE_{0}$ is also a projective and, by our assumption,
 $$
\sup_{M'\in \mathbb{M}_{t}}d_{H}([(\mE\oplus \mE_{0})'](M'),[\mE\oplus \mE_{0}](M))\underset{t\to
0}{\longrightarrow}0
 $$
holds. Also, we have
 $$
(\mE\oplus \mE_{0})'=\mE'\oplus \mE_{0}',\qquad
\mathcal\varpi^{n}_{m}\circ[\mE\oplus \mE_{0}]=\mE, \qquad
\mathcal\varpi^{n}_{m}\circ[\mE'\oplus \mE_{0}']=\mE',
$$
where $n=m+s$ and the map $\varpi^{n}_{m}$,
\begin{equation}
\label{projection le s}
\varpi^{n}_{m}\{\xi_{1},\dots,\xi_{n}\}:=\{\xi_{1},\dots,\xi_{m}\}
\end{equation}
projects on the first component of the sum $\mathbb
C^{m}\oplus\mathbb C^{s}=\mathbb C^{n}$. In view of the evident
relations
 $$
d_{H}(\mE'(M'),\mE(M))=d_{H}(\varpi^{n}_{m}\circ[\mE'\oplus
\mE_{0}'](M'),\varpi^{n}_{m}\circ[\mE\oplus \mE_{0}](M))\leqslant
d_{H}([\mE'\oplus \mE_{0}'](M'),[\mE\oplus \mE_{0}](M)),
 $$
we arrive at $\sup_{M'\in
\mathbb{M}_{t}}d_{H}(\mE'(M'),\mE(M))\underset{t\to
0}{\longrightarrow}0.$

Thus, if estimate (\ref{main result}) holds for projective
immersions then it is valid for all immersions
$\mE:M\to\mathbb C^n$. For this reason, in the sequel we assume
that
\smallskip

\centerline{all the maps $\mE$, which we deal with, are
projective.} \bigskip

\noindent$\bullet \ $ Suppose that $\mE:M\to\mathbb C^n,
\mE=\{w_1(\cdot),\dots,w_n(\cdot)\}$ is a (projective) immersion
and $\mE': M'\to\mathbb{C}^{n}$ is the corresponding map given by
(\ref{rule}). We first estimate the closeness, in Hausdorff
distance, between the parts of $\mE(M)$ and $\mE'(M)$ which are
separated from the curves $\mE(\Gamma)$ and $\mE'(\Gamma)$
respectively.

Denote by $V_{j}^\varepsilon$ the set of all $z\in\mathbb{C}$ such
that ${\rm dist}_{\mathbb{C}}(z,w_{j}(\Gamma))>\varepsilon$ and
the function $w_{j}-z$ has a simple (of the order 1) zero in $M$
and has no more zeroes in $M$. Some of $V_{j}^\varepsilon$ may be
empty; however, to deal with them is reasonable, when
immersion $\mE(M)$ is projective: see below.

Introduce the `cylinders'
 $$
\Pi_{j}^\varepsilon:=\underset{1}
{\mathbb{C}}\times\dots\times\underset{j}
{V_{j}^\varepsilon}\times\dots\times\underset{n} {\mathbb{C}}
 $$
and consider their union
$$\mathtt{Q}_{\varepsilon}\,:=\,\bigcup_{j=1}^{n}\Pi_{j}^\varepsilon\,\subset\,\mathbb C^n.$$
If a subset $K\subset\mE(M\backslash\Gamma)$ is compact then there
exists $\varepsilon_{0}>0$ such that
$K\subset\mathtt{Q}_{\varepsilon}$ for any $\varepsilon\in
(0,\varepsilon_{0})$. Indeed, since $\mE$ is projective, for any
$\xi\in K$ one can specify the number $j(\xi)$ and an $s(\xi)>0$
such that $\pi_{j(\xi)}\,\xi\in V_{j(\xi)}^{s(\xi)}$ holds.
Therefore, we have $\xi\in\Pi_{j(\xi)}^{s(\xi)}$. The sets
$\Pi_{j(\xi)}^{s(\xi)}\cap K$, $\xi\in K$ constitute an open cover
of $K$, from which one can choose a finite subcover
$\Pi_{j(\xi_{k})}^{s(\xi_{k})}\cap K$, $k=1,\dots,N$. Denote by
$\varepsilon_{0}=\underset{k}\min\{s(\xi_{k})\}$; then
$K\subset\mathtt{Q}_{\varepsilon}$ does hold for any
$\varepsilon\in (0,\varepsilon_{0})$.
\smallskip

\subsubsection*{Estimates}

$\bullet \ $ Now, we prove that the parts of the images $\mE(M)$
and $\mE'(M)$ that are contained in $\mathtt Q^\varepsilon$, are
close (if $\Lambda'$ is close to $\Lambda$). The proof is based on
the application of the generalized argument principle (\ref{GAP}).
The bar denotes the closure in $\mathbb C^n$; the projection
$\varpi^{n}_{m}$ is given by (\ref{projection le s}).
 \begin{lemma}
 \label{L2}
\

{\rm i}. The relation
 \begin{equation}
\label{main result on compacts} \sup_{M'\in
\mathbb{M}_{t}}d_{H}(\mE'(M')\cap\overline{\mathtt{Q}_{\varepsilon}},\mE(M)\cap\overline{\mathtt{Q}_{\varepsilon}})=O(t),
\qquad t\to 0
\end{equation}
holds for any fixed $\varepsilon>0$.

{\rm ii}. Suppose that the map $\varpi^{n}_{m}\mE$, $m<n$ is an
immersion. 
Then, for sufficiently small $t>0$ and any $M'\in \mathbb{M}_{t}$,
the map $\varpi^{n}_{m}\mE': \
\mE^{'-1}(\mathtt{Q}_{\varepsilon})\mapsto \mathbb{C}^{m}$ is
also an immersion.
 \end{lemma}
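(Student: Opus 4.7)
My plan is to prove both parts by combining the generalized argument principle (\ref{AP}) with the $C^2$-closeness of boundary traces established in Lemma \ref{L1}. The central observation is that on the closed set $\overline{V_j^\varepsilon}$ two properties hold uniformly: (a) $|\eta_j - z| \geqslant \varepsilon$ on $\Gamma$; (b) the function $w_j - z$ has a unique simple zero in $M$, because the zero-count
\begin{equation*}
n_j(z) \,=\, \frac{1}{2\pi i}\int_\Gamma \frac{\partial_\gamma \eta_j}{\eta_j - z}\, dl
\end{equation*}
is locally constant on $\mathbb C\setminus w_j(\Gamma)$ and equals $1$ on $V_j^\varepsilon$, hence on its closure.

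For part \textbf{i}, I fix $\xi \in K := \mE(M)\cap \overline{\mathtt Q_\varepsilon}$ and some index $j$ with $\xi_j \in \overline{V_j^\varepsilon}$. Let $x \in M$ be the unique simple zero of $w_j - \xi_j$; then (\ref{AP}) expresses each $w_k(x)$ as a Cauchy-type boundary integral in $\eta_k, \eta_j, \xi_j$. To produce a matching point on $M'$, I would examine the analogous count $n'_j(\xi_j)$: by (\ref{boundary curves estimate gen}) and the uniform bound (a) one gets $|n'_j(\xi_j) - n_j(\xi_j)| = O(t)$ uniformly in $\xi \in K$. Since $n'_j(\xi_j)$ is a nonnegative integer, it equals $1$ for $t$ small enough, so $w'_j - \xi_j$ has a unique simple zero $x' \in M'$. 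Applying (\ref{AP}) on $M'$ and subtracting from the $M$-formula yields
\begin{equation*}
|w'_k(x') - w_k(x)| \,\leqslant\, c\,t, \qquad k = 1,\dots,n,
\end{equation*}
i.e.\ $|\mE'(x') - \xi|_{\mathbb C^n} = O(t)$; by construction $w'_j(x') = \xi_j \in \overline{V_j^\varepsilon}$, so $\mE'(x') \in \overline{\mathtt Q_\varepsilon}$. The reverse inclusion is symmetric, because $\overline{V_j^\varepsilon}$ is defined via $M$: any $\xi' \in \mE'(M')\cap\overline{\mathtt Q_\varepsilon}$ has some $\xi'_j \in \overline{V_j^\varepsilon}$, so $w_j - \xi'_j$ has a unique simple zero on $M$ directly, and the same estimate produces a matching $\xi \in \mE(M)\cap\overline{\mathtt Q_\varepsilon}$.

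For part \textbf{ii}, I pick $x' \in \mE^{'-1}(\mathtt Q_\varepsilon)$ and an index $j$ with $w'_j(x') \in V_j^\varepsilon$. The counting argument above applied to $\xi_j = w'_j(x')$ forces $w'_j - w'_j(x')$ to have a unique simple zero on $M'$ for small $t$; hence $dw'_j(x') \neq 0$ and $w'_j$ is a local holomorphic coordinate near $x'$. If $j \leqslant m$, this already makes $\varpi^n_m \mE'$ an immersion at $x'$. Otherwise, let $x \in M$ be the unique simple zero of $w_j - w'_j(x')$; by hypothesis some $k \leqslant m$ satisfies $dw_k(x) \neq 0$, equivalently $(dw_k/dw_j)(x) \neq 0$. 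Differentiating (\ref{AP}) once in the parameter $z$ gives
\begin{equation*}
\frac{dw_k}{dw_j}(x) \,=\, \frac{1}{2\pi i}\int_\Gamma \eta_k\, \frac{\partial_\gamma \eta_j}{(\eta_j - w_j(x))^2}\, dl
\end{equation*}
and the same identity on $M'$. The denominators are bounded below by $\varepsilon^2$ on $\Gamma$, so Lemma \ref{L1} gives $|(dw'_k/dw'_j)(x') - (dw_k/dw_j)(x)| = O(t)$, so the $M'$-ratio is nonzero for small $t$ and hence $dw'_k(x') \neq 0$. The main delicacy throughout is that every $O(t)$-constant must be uniform in $\xi$ (or $x'$) and in $M' \in \mathbb M_t$; the uniform bound (a) and the uniform estimate of Lemma \ref{L1} deliver exactly this uniformity, so the hard part is bookkeeping rather than any genuinely new analytic estimate.
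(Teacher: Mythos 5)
Your proposal is correct and follows essentially the same route as the paper: you apply the generalized argument principle to express the $w_k$-values and their $z$-derivatives as Cauchy-type integrals, bound the difference using the $C^2$-closeness of $\eta'_j$ and $\eta_j$ together with the uniform lower bound on $|\eta_j-z|$ (and hence $|\eta'_j - z|$) over $\overline{V_j^\varepsilon}$, and use the integrality of the zero-count to obtain the unique simple zero on $M'$. The only cosmetic difference is in part (ii): you work with the complex ratios $dw_k/dw_j$, while the paper writes down real Jacobian matrices in the coordinate $z$ and argues that a full-rank matrix stays full-rank under an $O(t)$ perturbation; since for holomorphic maps full rank of the real $2m\times 2$ Jacobian is equivalent to some $dw_k/dw_j \ne 0$, the two statements coincide, and both proofs rely on the same implicit compactness argument to make the lower bound on $\max_k|(dw_k/dw_j)(x)|$ uniform over the relevant set of $x$.
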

\begin{proof}
\noindent$\bf 1.$\,\,\,
Let $\varepsilon>0$ be fixed and $\xi$ be
an arbitrary point of
$\mE(M)\cap\overline{\mathtt{Q}_{\varepsilon}}$. Then $\xi\in
\overline{\Pi_{j}^\varepsilon}$ for some $j$. Put
$z:=\pi_{j}\,\xi$, then $z\in \overline{V_{j}^\varepsilon}$ holds.
Also, put by definition $w_{0}=1$ and $w_{0}'=1$ on $M$ and $M'$
respectively. Recall that $\mE=\{w_1(\cdot),\dots,w_n(\cdot)\}$,
$\mE'=\{w'_1(\cdot),\dots,w'_n(\cdot)\}$ and $\eta_j=w_j|_\Gamma$,
$\eta'_j=w'_j|_\Gamma$. Denote
\begin{equation*}
J_{k,j}(z):=\frac{1}{2\pi
i}\int_{\Gamma}\eta_{k}\,\frac{\partial_{\gamma}\eta_{j}}{\eta_{j}-z}\,dl,
\qquad J'_{k,j}(z):=\frac{1}{2\pi
i}\int_{\Gamma}\eta'_{k}\,\frac{\partial_{\gamma}\eta'_{j}}{\eta'_{j}-z}\,dl.
\end{equation*}
By the generalized argument principle (\ref{GAP}), we have
\begin{equation}
\label{argument principle application} J_{k,j}(z)=\sum_{x\in
w_{j}^{-1}(z)} w_{k}\,(x)\,m(w_{j},x,z), \qquad
J'_{k,j}(z)=\sum_{x'\in {w'_{j}}^{-1}(z)}
w'_{k}(x')\,m'(w'_{j},x',z),
\end{equation}
where $k=0,\dots,n$ and the number $m(w_{j},x,z)$ (the number
$m'_{j}(w'_{j},x',z)$) is the order of zero $x$ (of zero $x'$) of
the function $w_{j}-z$ (the function $w'_{j}-z$) on the surface
$M$ (on $M'$). In particular, the number $J_{0,j}$  is the
total multiplicity of zeroes of the function $w_{j}-z$ on $M$,
whereas $J_{0,j}=1$ by the choice of the coordinate functions of
the projective immersion $\mE$. In the meantime, $J'_{0,j}$ is the
total multiplicity of zeroes of the function $w'_{j}-z$ on $M'$.
\smallskip

\noindent$\bf 2.$\,\,\,Let us estimate $J'_{k,j}(z)-J_{k,j}(z)$. We have
\begin{align*}
& |J'_{k,j}(z)-J_{k,j}(z)|=\Big|\frac{1}{2\pi i}\int_{\Gamma}\Big(\eta'_{k}\,\frac{\partial_{\gamma}\,\eta'_{j}}{\eta'_{j}-z}-\eta_{k}\,\frac{\partial_{\gamma}\,\eta_{j}}{\eta_{j}-z}\Big)\,dl\,\Big|=\\
& =\Big|\frac{1}{2\pi
i}\int_{\Gamma}\frac{(\eta_{j}-z)\,\eta'_{k}\,\partial_{\gamma}\eta'_{j}-(\eta'_{j}-z)\,\eta_{k}\,\partial_{\gamma}\eta_{j}}{(\eta'_{j}-z)(\eta_{j}-z)}\,dl\,\Big|.
\end{align*}
Also,
\begin{align*}
& |\partial_{z}J'_{k,j}(z)-\partial_{z}J_{k,j}(z)|=\Big|\frac{1}{2\pi i}\int_{\Gamma}\Big(\eta'_{k}\,\frac{\partial_{\gamma}\,\eta'_{j}}{(\eta'_{j}-z)^{2}}-\eta_{k}\,\frac{\partial_{\gamma}\,\eta_{j}}{(\eta_{j}-z)^{2}}\Big)\,dl\,\Big|.
\end{align*}
From the inclusion $z\in \overline{V_{j}^{\,\varepsilon}}\subset
V_{j}^{\,\varepsilon/2}$ it follows that $J_{0,j}=1$ and ${\rm
dist}_{\mathbb{C}}(z,w_{j}(\Gamma))>\varepsilon$. By the latter
inequality and (\ref{boundary curves estimate gen}), we have
\begin{equation}
\label{nonsmalldenom}
|\eta_{j}-z|>\varepsilon, \qquad |\eta'_{j}-z|\geqslant
|\eta_{j}-z|-\parallel
\eta'_{j}-\eta_{j}\parallel_{C(\Gamma;\mathbb{C})}\geqslant\varepsilon-O(t)\geqslant
\varepsilon/2
\end{equation}
on $\Gamma$ for any sufficiently small $t>0$. Therefore,
\begin{align*}
& |J'_{k,j}(z)-J_{k,j}(z)|\leqslant
c(\varepsilon)\parallel(\eta_{j}-z)\,\eta'_{k}\,\partial_{\gamma}\eta'_{j}-
(\eta'_{j}-z)\,\eta_{k}\,\partial_{\gamma}\eta_{j}\parallel_{C(\Gamma;\mathbb{C})}=\\
&
=c(\varepsilon)\parallel(\eta_{j}-z)\,(\eta'_{k}-\eta_{k})\,\partial_{\gamma}\eta'_{j}+
(\eta_{j}-z)\,\eta_{k}\,(\partial_{\gamma}\eta'_{j}-\partial_{\gamma}\eta_{j})+(\eta_{j}-\eta'_{j})\,\eta_{k}\,\partial_{\gamma}\eta_{j}\parallel_{C(\Gamma;\mathbb{C})}\leqslant \\
&\leqslant c(\varepsilon)\big(\parallel\eta_{j}-z\parallel_{C(\Gamma;\mathbb{C})}\parallel\eta'_{j}\parallel_{C(\Gamma;\mathbb{C})}\parallel\eta'_{j}\parallel_{C^{1}(\Gamma;\mathbb{C})}+\\
& +\parallel\eta_{j}-z\parallel_{C(\Gamma;\mathbb{C})}\parallel\eta_{l}\parallel_{C(\Gamma;\mathbb{C})}\parallel\eta'_{j}-\eta_{j}\parallel_{C^{1}(\Gamma;\mathbb{C})}+\\
&
+\parallel\eta_{j}-\eta'_{j}\parallel_{C(\Gamma;\mathbb{C})}\parallel\eta_{l}\parallel_{C(\Gamma;\mathbb{C})}\parallel\eta_{j}\parallel_{C^{1}(\Gamma;\mathbb{C})}\big)
\end{align*}
holds with $c(\varepsilon)$ independent of $x\in
\mE(M)\cap\overline{\mathtt{Q}_{\varepsilon}}$. Now,
(\ref{boundary curves estimate gen}) yields the estimate
\begin{equation}
\label{boundary integral estimates}
|J'_{k,j}(z)-J_{k,j}(z)|=O(t).
\qquad t\to 0,
\end{equation}
Similarly, (\ref{nonsmalldenom}) and (\ref{boundary curves estimate gen})
imply
\begin{equation}
\label{boundary integral estimates - diff}
|\partial_{z}J'_{k,j}(z)-\partial_{z}J_{k,j}(z)|=O(t).
\qquad t\to 0,
\end{equation}
Note that estimates (\ref{boundary integral estimates}), (\ref{boundary integral estimates - diff})
are uniform with respect to $z\in V_{j}^\varepsilon$ and $M'\in
\mathbb{M}_{t}$.
\smallskip

\noindent$\bf 3.$\,\,\, Let $x\in M$ be the (single,
simple) zero of the function $w_j-z$, which corresponds to the
equality $J_{0,j}=1$. Since $J'_{0,j}$ is a total multiplicity of
zeroes of $w'_{j}-z$ on $M'$ and $J_{0,j}=1$, (\ref{boundary
integral estimates}) implies that, for all sufficiently small
$t>0$, the function $w'_{j}-z$ has a single zero $x'=x'(x)$ on
$M'$ and this zero is simple. At this point, we establish
a bijection $x\leftrightarrow x'$ between some points of $M$ and
$M'$. It induces the bijection
 \begin{equation}\label{Eq First bijection}
\mE(x)=\,\xi\longleftrightarrow \,\xi'=\mE'(x')
 \end{equation}
between the parts
$\mE'(M')\cap\overline{\mathtt{Q}_{\varepsilon}}$ and
$\mE(M)\cap\overline{\mathtt{Q}_{\varepsilon}}$ of the surface
images. The latter makes it possible to estimate the distance
between the parts as follows.

By (\ref{argument principle application}) and (\ref{boundary
integral estimates}) we have
 $$
|\pi_{k}\mE'(x')-\pi_{k}\mE(x)|=|w'_{k}(x')-w_{k}(x)|=O(t), \qquad t\to 0;\quad k=1,\dots, n.
 $$
Hence we obtain the estimate
 $$
|\mE'(x')-\mE(x)|=O(t), \qquad t\to 0,
 $$
which is uniform with respect to $\xi\in
\mE(M)\cap\overline{\mathtt{Q}_{\varepsilon}}$ and $M'\in
\mathbb{M}_{t}$. This implies (\ref{main result on compacts}).
\smallskip

\noindent$\bf 4.$\,\,\, Also, estimates (\ref{argument principle application}) and (\ref{boundary
integral estimates - diff}) imply
\begin{equation}
\label{est diff}
|\partial_{z}w'_{k}(x')-\partial_{z}w_{k}(x)|=O(t), \qquad t\to 0;\quad k=1,\dots, n,
\end{equation}
where $z:=\pi_{j}\mE(x)=\pi_{j}\mE'(x')\in V_{j}^\varepsilon$. Let $m<n$. Taking $\Re z$, $\Im z$ as local coordinates of the point $x\in M$ ($x'\in M'$), consider the maps
\begin{align*}
\hat{\mE}: \ (\Re z,\Im z)\mapsto (\Re w_{1},\dots,\Re w_{m},\Im w_{1},\dots,\Im w_{m}),\\
\hat{\mE'}: \ (\Re z,\Im z)\mapsto (\Re w'_{1},\dots,\Re w'_{m},\Im w'_{1},\dots,\Im w'_{m})
\end{align*}
Denote by ${\rm Jac\,}\hat{\mE}$ and ${\rm Jac\,}\hat{\mE}'$ the
Jacobian matrices of $\hat{\mE}$ and $\hat{\mE}'$, respectively.
Then (\ref{est diff}) yields
\begin{equation}
\label{est diff 1}
\parallel {\rm Jac\,}\hat{\mE}'(z)-{\rm Jac\,}\hat{\mE}(z)\parallel=O(t), \qquad t\to 0;\quad k=1,\dots, n.
\end{equation}
Suppose that the map $\varpi^{n}_{m}\mE$ is an immersion, then,
for any $z\in V_{j}^\varepsilon$, the matrix ${\rm
Jac\,}\hat{\mE}(z)$ is of full rank. In view of (\ref{est
diff 1}), there exist a sufficiently small $t_{j}>0$, such that
the matrix ${\rm Jac\,}\hat{\mE}'(z)$ is of full rank for any
$t\in (0,t_{j})$, $M'\in\mathbb{M}_{t}$, and $z\in
V_{j}^\varepsilon$. Thus, the differential
$d_{x}(\varpi^{n}_{m}\mE')$ of the map $\varpi^{n}_{m}\mE'$ is
injective for any $t\in (0,\min_{j=1,\dots,n}t_{j})$, $M'\in
\mathbb{M}_{t}$ and any $x'\in M'$ such that
$\mE'x'\in\mathtt{Q}_{\varepsilon}$. This proves Lemma \ref{L2},
{\rm ii}.
\end{proof}
\noindent$\bullet \ $ Recall that $r_{AB}$ is defined in (\ref{Eq
def d Hauss}). As a consequence of (\ref{main result on
compacts}), the following holds.
 \begin{corollary}\label{Cor}
The relation
\begin{equation}
\label{main result 1st part} \sup_{M'\in
\mathbb{M}_{t}}\,r_{\mE'(M')\,\mE(M)}\underset{t\to
0}{\longrightarrow}0
\end{equation}
is valid.
 \end{corollary}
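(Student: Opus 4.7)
The plan is to split $\mE(M)$ into two pieces---a compact ``interior'' piece uniformly separated from $\mE(\Gamma)$, for which Lemma \ref{L2}(i) provides closeness, and a thin ``collar'' piece near $\mE(\Gamma)$, which is controlled by the boundary estimate (\ref{Eq Close bound images})---and then bound $r_{\mE'(M')\,\mE(M)}$ on each piece separately.

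Given $\delta>0$, I would first invoke uniform continuity of $\mE$ on the compact manifold $M$ to pick $\sigma>0$ so small that every point of $\mE(\{x\in M\mid\mathrm{dist}_M(x,\Gamma)<\sigma\})$ lies within $\delta/2$ of $\mE(\Gamma)$ in $\mathbb C^n$. Setting $M_\sigma:=\{x\in M\mid\mathrm{dist}_M(x,\Gamma)\geqslant\sigma\}$, the set $K:=\mE(M_\sigma)$ is a compact subset of $\mE(M\setminus\Gamma)$; the argument given in the text just before the Estimates subsection then supplies an $\varepsilon>0$ with $K\subset\overline{\mathtt Q_\varepsilon}$. With this $\varepsilon$ frozen, Lemma \ref{L2}(i) yields $t_1>0$ such that for all $t<t_1$ and all $M'\in\mathbb M_t$, every point of $K$ lies within $\delta/2$ of $\mE'(M')\cap\overline{\mathtt Q_\varepsilon}\subset\mE'(M')$.

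For the collar part, any $\xi=\mE(x)$ with $x\in M\setminus M_\sigma$ is within $\delta/2$ of some $\mE(y)$, $y\in\Gamma$, by the choice of $\sigma$; the estimate (\ref{Eq Close bound images}) gives $|\mE(y)-\mE'(y)|\leqslant c\,t$, and $\mE'(y)$ lies in $\mE'(\Gamma)\subset\mE'(M')$ because $\Gamma=\Gamma'$. Taking $t_0:=\min\{t_1,\delta/(2c)\}$ therefore makes every point of $\mE(M)$ lie within $\delta$ of $\mE'(M')$ whenever $t<t_0$ and $M'\in\mathbb M_t$, and the bound is uniform in $M'$; this yields (\ref{main result 1st part}).

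The one point requiring care is the order of quantifiers: the implicit constant in Lemma \ref{L2}(i) depends on $\varepsilon$, so $\sigma$ (hence $\varepsilon$) must be chosen \emph{before} $t$ is sent to zero. Since $\sigma$ depends only on $\delta$ and on $\mE$---not on the perturbation $M'$---this is a bookkeeping constraint rather than a genuine obstacle. Note also that only the direction ``$\mE(M)$ is covered by a neighborhood of $\mE'(M')$'' is needed; the reverse inequality $r_{\mE(M)\,\mE'(M')}$ will require a separate argument elsewhere, since the collar of $\mE'(M')$ near $\mE'(\Gamma)$ cannot be controlled merely by boundary closeness of the two immersions.
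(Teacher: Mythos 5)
Your proof is correct and uses the same strategy as the paper's: split $\mE(M)$ into a compact interior piece, covered via Lemma \ref{L2}(i) and the containment $K\subset\overline{\mathtt Q_\varepsilon}$, and a thin collar near $\mE(\Gamma)$, covered via the boundary estimate (\ref{Eq Close bound images}) together with $\mE'(\Gamma)\subset\mE'(M')$. The only difference is cosmetic: you carve out the collar by intrinsic distance in $M$ from $\Gamma$ (so $K$ is a fixed compact set, independent of $M'$), whereas the paper defines $K$ by distance in $\mathbb C^n$ from $\mE'(\Gamma)$ and thus has $K$ varying with $M'$ — your version is arguably the cleaner way to handle the quantifier bookkeeping you flag at the end.
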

\begin{proof} Fix an arbitrary $\varepsilon>0$. In view of (\ref{Eq Close
bound images}), we have $\sup_{M'\in
\mathbb{M}_{t}}d_{H}(\mE'(\Gamma),\mE(\Gamma))=O(t)$. So, for all
sufficiently small $t$ and any $M'\in \mathbb{M}_{t}$, each point
$\xi\in\mE(\Gamma)$ is contained in
$\varepsilon/3$-neighborhood in $\mathbb{C}^{n}$ of some point
$\xi'\in\mE'(\Gamma)$. The set
 $$
K=\left\{\, \zeta\in \mE(M) \ | \ {\rm dist}_{\,\mathbb
C^n}\,(\zeta,\mE'(\Gamma))\geqslant \varepsilon/3\,\right\}
 $$
is compact in $\mE(M)$. As was shown before, there exists a
sufficiently small $\varepsilon>0$ such that $K\subset
\mathtt{Q}_{\varepsilon}$. Then (\ref{main result on compacts})
implies that, for all sufficiently small $t$ and for any $M'\in
\mathbb{M}_{t}$, the set $K$ is contained in
$\varepsilon-$neighborhood of $\mE'(M')$ in $\mathbb{C}^n$. Now,
let $\sigma$ be an arbitrary point of $\mE(M)\backslash K$.
By definition of $K$, one has ${\rm dist}_{\,\mathbb
C^n}(\sigma,\mE'(\Gamma))<\varepsilon/3$. So, $\sigma$ is
contained in $\varepsilon/3$-neighborhood in $\mathbb{C}^{n}$ of
some point $\xi'\in\mE'(\Gamma)$, while $\xi$ is contained in
$\varepsilon/3$-neighborhood in $\mathbb{C}^{n}$ of some point
$\xi'\in\mE'(\Gamma)$. Thus, $K$ is contained in
$\varepsilon$-neighborhood of $\mE'(\Gamma)$ in $\mathbb{C}^{n}$.
Therefore, for all sufficiently small $t$ and for any $M'\in
\mathbb{M}_{t}$, the whole surface $\mE(M)$ is contained in
$\varepsilon$-neighborhood of $\mE'(\Gamma)$ in $\mathbb{C}^{n}$.
Since $\varepsilon>0$ is arbitrary, the latter yields (\ref{main
result 1st part}).
\end{proof}


\section{Closeness near the boundary}

It remains to estimate the closeness of the parts of ${\mE}(M)$
and ${\mE}'(M')$, which are contained in a small neighborhood of
${\mE}(\Gamma)$ and ${\mE'}(\Gamma)$. To do
this, a relevant analog of the bijection (\ref{Eq First
bijection}) that relates the points of these parts, is required.
It is provided by the appropriate local coordinates for points of
${\mE}(M)$ and ${\mE}'(M')$ in such neighborhoods.

\subsubsection*{Coordinates}

$\bullet \ $ Recall that $\mE=\{w_1(\cdot),\dots,w_n(\cdot)\}$ and
$\eta_j=w_j|_\Gamma$.

Fix a point $a\in\Gamma$. Since $\mE$ is projective, there exist a
number $j$ and a small enough open disk $U_{a}\subset\mathbb{C}$
with center $z_{a}:=\eta_{j}(a)$ such that
 \begin{enumerate}
\item $\partial_{\gamma}\eta_{j}(a)\ne 0$ holds,

\item $U_{a}\backslash w_{j}(\Gamma)$ consists of two connected
components $U_{a,0}$ and, $U_{a,1}$, each component being
diffeomorphic to an (open) half-disk, whereas $U_{a,0}\cap
w_{j}(M)=\varnothing$ holds,

\item for any $z\in U_{a,1}$, the function $w_{j}-z$ has a simple zero on $M$ and has no more zeroes on $M$.
\end{enumerate}
Introduce the cylinder
 $$
{\Pi}_{a}:=\underset{1}{\mathbb{C}}\times\dots\times
\underset{j}{U_{a}}\times\dots\times\underset{n}{\mathbb{C}}.
 $$
In view of the properties above, the projection $\pi_{j}$ is a
bijection  from ${\mE}(M)\cap{\Pi}_{a}$ onto $U_{a}\backslash U_{a,0}$. Due
to this, we can regard $z:=\pi_{j}p$ as a {\it local coordinate}
of the point $p\in {\mE}(M)\cap{\Pi}_{a}$.
\smallskip

\noindent$\bullet$\,\,\,Since $\partial_{\gamma}\eta_{j}(a)\ne 0$,
we can choose the neighborhood $\Gamma_{a}$ of $a$ in $\Gamma$
sufficiently small to obey
\begin{equation}
\label{curvestraight1}
\Re\,\frac{\partial_{\gamma}\eta_{j}(l)}{\partial_{\gamma}\eta_{j}(a)}\in
[c_{0},1/c_{0}], \qquad l\in\Gamma_{a}
\end{equation}
with $c_{0}>0$. Without loss of generality, we assume that
$U_{a}\cap w_{j}(\Gamma)=w_{j}(\Gamma_{a})$. Denote
 $$
{\zeta}(z):=\frac{z-z_{a}}{\partial_{\gamma}\eta_{j}(a)}\,,\qquad
\quad {\zeta}_{1}(z):=\Re{\zeta}(z),\quad
{\zeta}_{2}(z):=\Im{\zeta}(z),\quad z\in U_{a}
 $$
and
 $$
\psi={\zeta}\circ\eta_{j}, \quad \psi_{1}:=\Re\psi, \quad
\psi_{2}:=\Im\psi\qquad {\rm on}\,\,\Gamma_a.
 $$
Due to (\ref{curvestraight1}), the map $\psi_{1}: \
\Gamma_{a}\mapsto \mathbb{R}$ is an injection and
${\partial\psi_{1}/\partial l}\in [c_{0},1/c_{0}]$ holds for $l\in
\Gamma_{a}$.

Introduce the new local coordinates
\begin{equation*}
\omega(z):=(s(z),r(z)); \quad
s(z):=[\psi_{1}^{-1}\circ\,{\zeta}_{1}](z), \quad
r(z):={\zeta}_{2}(z)-[\psi_{2}\circ\psi_{1}^{-1}\circ{\zeta}_{1}](z),\qquad
z\in U_{a},
\end{equation*}
then the Jacobian $\mathscr J$ of the passage $(\Re
z,\Im z)\mapsto (s,r)$ obeys
 $$
|\mathscr J(z)|\in [c_{1},1/c_{1}], \qquad z\in U_{a}
 $$
with $c_{1}>0$. Thus, the map $\Omega:={\omega}\circ{\pi}_{j}$ is
a bijection from ${\mE}(M)\cap{\Pi}_{a}$ onto some neighborhood of
$\Gamma_{a}\times\{0\}$ in the strip
$\Gamma_{a}\times[0,+\infty)$. Note that $\Omega({\mE}(l))=(l,0)$
for any $l\in\Gamma_{a}$, so that, in a sense, the passage to
$(s,r)$ rectifies a portion of the near-boundary part of
$\mE(\Gamma_a)$.

\subsubsection*{Primed coordinates}

\noindent$\bullet$\,\,\, Recall that
$\mE'=\{w'_1(\cdot),\dots,w'_n(\cdot)\}$ and
$\eta'_j=w'_j|_\Gamma=\beta'_\Gamma\eta_j$.

Now, let us introduce analogous coordinates on ${\mE}'(M')$ near
$\mE'(\Gamma)$. Let $V_{a}$ be a closed disk {in $\mathbb{C}$}
centered at $z_{a}$ provided {$V_a\subset U_a$}. From
(\ref{curvestraight1}) and estimate (\ref{boundary curves estimate
gen}) it follows that
\begin{equation}
\label{partcurve} \eta'_{j}(\Gamma\backslash\Gamma_{a})\cap
V_{a}=\varnothing,\qquad
|\partial_{\gamma}\eta'_{j}(a)|>|\partial_{\gamma}\eta_{j}(a)|/2>0,
\end{equation}
and
\begin{equation}
\label{curvestraight2}
\Re\,\frac{\partial_{\gamma}\eta'_{j}(l)}{\partial_{\gamma}\eta'_{j}(a)}\in
[c_{0}/2,2/c_{0}], \qquad l\in\Gamma_{a}
\end{equation}
is valid for sufficiently small $t$. By (\ref{boundary
curves estimate gen}), the point $z'_{a}:=\eta'_{j}(a)$ is
close to $z_a=\eta_j(a)$. For $z'\in\mathbb{C}$, denote
\begin{equation*}
{\zeta}'(z'):=\frac{z'-z'_{a}}{\partial_{\gamma}\eta'_{j}(a)},
\qquad  {\zeta}'_{1}(z'):=\Re{\zeta}'(z'), \quad
{\zeta}'_{2}(z'):=\Im{\zeta}'(z'),
\end{equation*}
and
 $$
\psi'={\zeta}'\circ\eta'_{j}, \quad \psi'_{1}:=\Re\psi', \quad \psi'_{2}:=\Im\psi'.
 $$
Due to (\ref{curvestraight2}), the map $\psi'_{1}: \
\Gamma_{a}\mapsto \mathbb{R}$ is an injection and $\partial\psi'_{1}/\partial l\in [c_{0}/2,2/c_{0}]$ for $l\in
\Gamma_{a}$. This and (\ref{partcurve}) imply that the curve
$\eta'_{j}(\Gamma)$ divides $V_{a}$ into two components $V'_{a,0}$
and $V'_{a,1}$, whose closures are diffeomorphic to a half-disk.
\smallskip

\noindent$\bullet$\,\,\, Choose ${z}_{0}\in U_{a,0}\cap V_{a}$ and
${z}_{1}\in U_{a,1}\cap V_{a}$. Due to (\ref{boundary curves
estimate gen}), any fixed neighborhood of the set
${\eta_{j}}(\Gamma_{a})\cap V_{a}$ in $\mathbb{C}$ contains
{$\eta_{j}'(\Gamma_{a})\cap V_{a}$} for sufficiently small $t$.
Hence,
\begin{equation}
\label{dnomcur} {\rm dist}_{\,\mathbb C}
\left({z}_{k},\eta'_{j}(\Gamma_{a})\right)\geqslant c_{2}, \qquad
k=0,1
\end{equation}
holds with $c_{2}>0$. Here and in the subsequent, all the
constants are independent of $M'\in\mathbb{M}_{t}$ and
(sufficiently small) $t$. Thus, we can assume that ${z}_{0}\in
V'_{a,0}$ and ${z}_{1}\in V'_{a,1}$ is fulfilled. In view of the
argument principle and formulas (\ref{dnomcur}),
(\ref{partcurve}), and (\ref{boundary curves estimate gen}), we
have
 $$
m(w'_{j},{z}_{k})=\frac{1}{2\pi
i}\int_{\Gamma}\frac{\partial_{\gamma}\eta'_{j}}{\eta'_{j}-{z}_{k}}\,dl\stackrel{t\to
0}{\longrightarrow}\frac{1}{2\pi
i}\int_{\Gamma}\frac{\partial_{\gamma}\eta_{j}}{\eta_{j}-{z}_{k}}\,dl=m(w_{j},{z}_{k}),
\qquad k=0,1,
 $$
where $m(w,z)$ is the total multiplicity of zeroes of a function
$w-z$ holomorphic on a Riemann surface. Since $m(w,z)$ is integer,
$m(w'_{j},{z}_{k})=m(w_{j},{z}_{k})$ holds for sufficiently small
$t$. Since ${z}_{k}\in U_{a,k}$, we have $m(w_{j},{z}_{k})=k$,
whence $m(w'_{j},{z}_{k})=k$. Note that the integral
 $$
m(w'_{j},z')=\frac{1}{2\pi
i}\int_{\Gamma}\frac{\partial_{\gamma}\eta'_{j}}{\eta'_{j}-z'}\,dl
 $$
is the winding number of the curve $l\mapsto \eta'_{j}(l)$
($l\in\Gamma$) around the point $z'$ and, hence, $m(w'_{j},\cdot)$
is a constant on each connected component of $\mathbb{C}\backslash
\eta'_{j}(\Gamma)$. Thus, $m(w'_{j},z')=k$ for any $z'\in
V'_{a,k}$ i.e. $w'_{j}{(M')}\cap V'_{a,0}=\varnothing$ and the
pre-image ${w_j'}^{\,-1}(z')$ of any $z'\in V'_{a,1}$ consists of
a single point on $\mE'(M)$.
\smallskip

\noindent$\bullet$\,\,\,Introduce the cylinder
 $$
{\Pi}'_{a}:=\underset{1}{\mathbb{C}}\times\dots\times
\underset{j}{V_{a}}\times\dots\times\underset{n}{\mathbb{C}}\,.
 $$
Then the projection ${\pi}_{j}$ is a bijection from
${\mE}'(M')\cap{\Pi}'_{a}$ onto $V_{a}\backslash V'_{a,0}$ and we
can take $z':={\pi}_{j}p$ as a local coordinate of the point $p\in
{\mE}'(M')\cap{\Pi}'_{a}$. Define the new local coordinates
 $$
\omega'(z'):=(s'(z'),r'(z')), \qquad
s'(z'):=[\psi^{'-1}_{1}\circ{\zeta}'_{1}](z'), \qquad
r'(z'):={\zeta}'_{2}(z')-[\psi'_{2}\circ\psi^{'-1}_{1}\circ{\zeta}'_{1}](z').
 $$
Note that $r'(z')\geqslant 0$ for any $z'\in V_{a}\backslash
V'_{a,0}$. In view of (\ref{curvestraight2}) and (\ref{boundary
curves estimate gen}), the function
$\theta'(\tau):=[\psi'_{2}\circ\psi^{'-1}_{1}](\tau)$ obeys
\begin{equation}
\label{thetaest} |\partial\theta'/\partial\tau|\leqslant
c_{3}<\infty, \qquad \tau\in \psi'_{1}(\Gamma_{a}).
\end{equation}
Thus, the Jacobian $\mathscr J'$ of the passage $(\Re z',\Im
z')\mapsto (s',r')$ obeys
 $$|
\mathscr J'(z')|\in [c_{1}/2,2/c_{1}], \qquad z'\in V_{a}
 $$
for sufficiently small $t$. Moreover, estimate (\ref{boundary
curves estimate gen}) implies {that}
 $$
\sup_{z'\in V_{a}}|\omega'(z')-\omega(z')|\underset{t\to
0}\longrightarrow 0
 $$
{and}
 \begin{equation}
\label{coordclose} \sup_{z'\in V_{a}}|\,[\omega^{-1}\circ\,
\omega'](z')-z'\,|\underset{t\to 0}\longrightarrow 0
 \end{equation}
hold uniformly with respect to $M'\in \mathbb{M}_{t}$.
Thus, for sufficiently small $t$ we have $\omega'(V_{a}\backslash
V'_{a,0})\subset \omega(U_{a})\cap
\left[\,\Gamma_{a}\times[0,+\infty)\,\right]=\omega(U_{a}\backslash
U_{a,0})$. Therefore, the map $\Omega':=\omega'\circ{\pi}_{j}$ is
an injection from ${\mE}'(M')\cap{\Pi}'_{a}$ into
$\omega(U_{a}\backslash U_{a,0})=\Omega({\mE}(M)\cap{\Pi}_{a})$.
Note that $\Omega'({\mE}'(l))=(l,0)$ for any $l\in\Gamma$ such
that ${\mE}'(l)\in{\Pi}'_{a}$. So, for small $t$, the map
 $$
\Omega^{-1}\circ \,\Omega': \
{\mE}'(M')\cap{\Pi}'_{a}\mapsto{\mE}(M)\cap{\Pi}_{a}
 $$
is well-defined and is an injection.

With each $p'\in{\mE}'(M')\cap{\Pi}'_{a}$ we associate the point
$p:=[\Omega^{-1}\circ \Omega']\,(p')$. It is the required local
{bijection} between near-boundary points of $\mE(M)$ and
$\mE'(M')$. Note that, if $p'={\mE}'(l)$ with $l\in\Gamma$, then
$p={\mE}(l)$.

\subsubsection*{Estimates via argument principle}
The next step is to establish the closeness of points $p'$ and
$p$. From a technical point of view, the difficulty consists in
estimating the values of Cauchy-type integrals {near the
integration contour.}
 \begin{lemma}\label{L last}
\

{\rm i.} The relation
 \begin{equation}\label{proximitineartboundary}
\sup_{M'\in\mathbb{M}_{t}}\Big(\sup_{p'\in
{\mE}'(M')\cap{\Pi'}_{a}}\|\,[\Omega^{-1}\circ
\Omega'](p')-p'\,\|_{\mathbb C^n}\Big)\underset{t\to
0}\longrightarrow 0
 \end{equation}
is valid.

{\rm ii}. Suppose that the map $\varpi^{n}_{m}\mE$, $m<n$ is an immersion {\rm (}the projection $\varpi^{n}_{m}$ is given by {\rm (\ref{projection le s})}{\rm )}. Then, for sufficiently small $t>0$ and any $M'\in \mathbb{M}_{t}$, the map $\varpi^{n}_{m}\mE': \ \mE^{'-1}(\Pi_{a})\mapsto \mathbb{C}^{m}$ is an immersion.
 \end{lemma}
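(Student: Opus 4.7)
The plan is to express the components of $p:=[\Omega^{-1}\circ\Omega'](p')$ and of $p'$ via the argument principle (\ref{AP}) and bound their differences, using the rectifying coordinates $\omega,\omega'$ to tame the near-singular Cauchy-type integrals that arise on $\Gamma_a$. The $j$-th component is immediate: by construction $\pi_j p=z_j:=\omega^{-1}\!\circ\omega'(z'_j)$ with $z'_j:=\pi_j p'$, so (\ref{coordclose}) yields $|\pi_j p-\pi_j p'|\to 0$ uniformly over $M'\in\mathbb{M}_t$ and $p'\in\mE'(M')\cap\Pi'_a$. The substance of part~i lies in the remaining components.

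For $k\neq j$, let $x\in M$ and $x'\in M'$ be the unique simple pre-images of $z_j,z'_j$ under $w_j,w'_j$ (established in the paragraph preceding the lemma). By (\ref{AP}),
\begin{equation*}
w_k(x)-w'_k(x')\;=\;\frac{1}{2\pi i}\int_{\Gamma}\Bigl[\eta_k\frac{\partial_{\gamma}\eta_j}{\eta_j-z_j}-\eta'_k\frac{\partial_{\gamma}\eta'_j}{\eta'_j-z'_j}\Bigr]\,dl.
\end{equation*}
I will split $\Gamma=\Gamma_a\cup(\Gamma\setminus\Gamma_a)$. On $\Gamma\setminus\Gamma_a$ the denominators are bounded below by a positive constant depending only on $U_a$ and $V_a$ (since $\eta_j(\Gamma\setminus\Gamma_a)\cap U_a=\varnothing$ and the primed analogue follows from (\ref{partcurve})); the estimates of Step 2 of Lemma \ref{L2} then deliver an $o(1)$ contribution bounded by $c\bigl(\|\eta'_k-\eta_k\|_{C^1(\Gamma;\mathbb{C})}+\|\eta'_j-\eta_j\|_{C^1(\Gamma;\mathbb{C})}+|z_j-z'_j|\bigr)$. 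On $\Gamma_a$ I will pass to variables $\tau=\psi_1(l)$ and $\tau'=\psi'_1(l)$ in the two integrands; writing $(s,r):=\omega(z_j)=\omega'(z'_j)$, $\theta:=\psi_2\circ\psi_1^{-1}$, $\theta':=\psi'_2\circ\psi'^{-1}_1$, the factorisation
\begin{equation*}
\eta_j(l)-z_j\;=\;\partial_\gamma\eta_j(a)\bigl((\tau-s)+i(\theta(\tau)-\theta(s)-r)\bigr)
\end{equation*}
and its primed analogue put both integrals into the same Cauchy form with kernel $((\tau-s)-ir)^{-1}$. Because the local parameter $(s,r)$ is identical on both sides, the difference of the integrands is controlled by the $O(t)$ quantities $\eta'_k-\eta_k$, $\partial_\gamma\eta'_j-\partial_\gamma\eta_j$, $\theta'-\theta$ and $\partial_\tau\theta'-\partial_\tau\theta$ (see (\ref{boundary curves estimate gen}) and (\ref{thetaest})); standard Cauchy-type estimates then produce an $o(1)$ contribution uniformly in $(s,r)\in\omega(V_a\setminus V'_{a,0})$. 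Combined with the $j$-th component bound, this gives (\ref{proximitineartboundary}).

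For part~ii, I will differentiate the Cauchy-integral representation of $w'_k(x')$ in $z'_j$ and rerun the same two-piece estimate with the extra factor in the denominator; together with (\ref{coordclose}) this yields $\|{\rm Jac\,}\hat\mE'-{\rm Jac\,}\hat\mE\|\to 0$ uniformly on $\mE^{'-1}(\Pi_a)$, so that the full rank of ${\rm Jac\,}\hat\mE$ (granted by the immersivity of $\varpi^n_m\mE$) transfers to ${\rm Jac\,}\hat\mE'$ for sufficiently small $t$, whence $\varpi^n_m\mE'$ is an immersion on $\mE^{'-1}(\Pi_a)$.

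The main obstacle is the $\Gamma_a$-piece: as $(s,r)$ approaches the boundary curve ($r\to 0$) both Cauchy kernels become singular, and only the fact that the primed and unprimed kernels share the same centre $s$ and the same imaginary offset $r$, together with the $C^2$-closeness of $\psi'_1\approx\psi_1$ and $\theta'\approx\theta$ furnished by (\ref{boundary curves estimate gen}), allows the leading singular contributions to cancel uniformly in $(s,r)$. The rectifying choice of $\omega,\omega'$ is precisely what makes this cancellation possible.
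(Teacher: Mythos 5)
Your overall strategy — express $\pi_k p$ and $\pi_k p'$ by the argument principle, split $\Gamma$ into a near piece and a far piece, and exploit the common $(s,r)$ coordinates supplied by $\Omega,\Omega'$ — is the same as the paper's. The $j$-th component via \eqref{coordclose} and the far piece over $\Gamma\setminus\Gamma_a$ are handled correctly. The problem is the near piece, which is where the paper's technical work lies, and where your argument has a genuine gap.

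You write that on $\Gamma_a$, after passing to $\tau=\psi_1(l)$, ``the difference of the integrands is controlled by the $O(t)$ quantities $\eta'_k-\eta_k$, $\partial_\gamma\eta'_j-\partial_\gamma\eta_j$, $\theta'-\theta$, $\partial_\tau\theta'-\partial_\tau\theta$, and standard Cauchy-type estimates then produce an $o(1)$ contribution uniformly in $(s,r)$.'' As stated this does not follow. Writing $D=\eta_j-z_j$, $D'=\eta'_j-z'_j$ and $N=\eta_k\partial_\gamma\eta_j$, $N'=\eta'_k\partial_\gamma\eta'_j$, the difference of the kernels is
\[
\frac{N}{D}-\frac{N'}{D'}\;=\;\frac{(N-N')D'+N'(D'-D)}{DD'},
\]
and near $l=s$ one has $|D|,|D'|\asymp |l-s|+r$. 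Bounding $N-N'$ and $D-D'$ by $O(t)$ gives a pointwise bound of order $t/(|l-s|+r)^{2}+t/(|l-s|+r)$, whose integral over $\Gamma_a$ is $O(t/r)+O(t\log(1/r))$. That is \emph{not} uniformly small as $r\to0$ with $t$ fixed, so the uniformity in $p'$ required by \eqref{proximitineartboundary} is lost precisely in the regime where $p'$ approaches $\mE'(\Gamma)$. You flag that the ``leading singular contributions cancel'', but the cancellation is not automatic and requires an explicit device.

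The paper's device is to first subtract the constant $\eta'_k(s)$ (resp.\ $\eta_k(s)$) from the integrand — legal because $\frac{1}{2\pi i}\int_\Gamma\partial_\gamma\eta'_j/(\eta'_j-z')\,dl=1$ — yielding a numerator $\eta'_k-\eta'_k(s)$ that vanishes to first order at $l=s$. Then it splits $\Gamma$ at distance $\delta$ from $s$ (not at the fixed set $\Gamma_a$): on $\Gamma_\delta(s)$ the integral is bounded by $c_4\delta$ separately for each of $J'_{k,\delta}$ and $J_{k,\delta}$ (estimates \eqref{trianglelemma}--\eqref{badintest}), and on $\Gamma\setminus\Gamma_\delta(s)$ the denominators are bounded below by $c\delta$, allowing the $t\to0$ limit. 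Your approach can be repaired by the same cancellation: subtract $(N-N')(s)$ and $(D-D')(s)$ at the singular point before integrating, or simply work with each integral separately after subtracting the value at $s$ as the paper does. As written, however, the key step is asserted, not proved. Two smaller points: \eqref{thetaest} only gives $|\partial_\tau\theta'|\leqslant c_3$, not closeness of $\partial_\tau\theta'$ to $\partial_\tau\theta$; that closeness needs a small argument from \eqref{boundary curves estimate gen}. And for part ii, the paper's computation \eqref{argprincex diff} shows that the constant to subtract from the differentiated integrand is the first-order Taylor polynomial of $\eta'_k$ in $\eta'_j$ (not just $\eta'_k(s)$), because the kernel is now $(\eta'_j-z')^{-2}$; differentiating your representation and ``rerunning'' the near-piece estimate unchanged would again leave an uncontrolled singularity.
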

\begin{proof}\,\,\,
It suffices to prove (\ref{proximitineartboundary}) for
${\mE}'(M')$ replaced by ${\mE}'(M'\backslash\Gamma)$. This fact
follows from (\ref{Eq Close bound images}) since
$|\,[\Omega^{-1}\circ \Omega'](p')-p'\,|=|{\mE}(l)-{\mE}'(l)|$ for
any $p'={\mE}'(l)$.
\smallskip

\noindent$\bf 1.$\,\,\,Let $p'$ be an arbitrary point of
${\mE}'(M'\backslash\Gamma)\cap{\Pi}'_{a}$. Denote
$p:=[\Omega^{-1}\circ \,\Omega']\,(p')$, $z':={\pi}_{j}p'$,
$z:={\pi}_{j}p$, and $(s,r):=\Omega'(p')=\Omega(p)$. Note that
$z'\in V'_{a,1}$, $z\in U_{a,1}$, and $r>0$. Hence, the pre-image
{${w'_j}^{\,-1}(\{z'\})$} contains a single point $x'\in M'$,
whereas {${w_j}^{-1}(\{z\})$} consists of a single point $x\in M$.
Also, $x'$ is a simple zero of $w'_{j}-z'$ and $x$ is a simple
zero of $w_{j}-z$. By the generalized argument principle, we have
\begin{equation}
\label{argprincex}
\begin{split}
{\pi}_{k}p'=w'_{k}(x')=\frac{1}{2\pi
i}\int_{\Gamma}\frac{\eta'_{k}\,\partial_{\gamma}\eta'_{j}}{\eta'_{j}-z'}\,dl=
\frac{1}{2\pi i}\int_{\Gamma}\frac{(\eta'_{k}-\eta'_{k}(s))\,\partial_{\gamma}\,\eta'_{j}}{\eta'_{j}-z'}\,dl+\\
+\eta'_{k}(s)\,\frac{1}{2\pi
i}\,\int_{\Gamma}\frac{\partial_{\gamma}\eta'_{j}}{\eta'_{j}-z'}\,dl=
\frac{1}{2\pi
i}\,\int_{\Gamma}\frac{(\eta'_{k}-\eta'_{k}(s))\,\partial_{\gamma}\eta'_{j}}{\eta'_{j}-z'}\,dl+\eta'_{k}(s).
\end{split}
\end{equation}
For $\delta>0$, denote $\Gamma_{\delta}(s)=\{l\in\Gamma \ | \ {\rm
dist}_{\Gamma}(l,s)\leqslant \delta\}$ (then
$\Gamma_{\delta}(s)\subset\Gamma_{a}$ for sufficiently small $t$
and $\delta$). Put
\begin{equation}
\label{integargparts} J'_{k,\delta}(z'):=\frac{1}{2\pi
i}\int_{\Gamma_{\delta}}\frac{(\eta'_{k}-\eta'_{k}(s))\,\partial_{\gamma}\eta'_{j}}{\eta'_{j}-z'}\,dl,
\qquad \tilde{J}'_{k,\delta}(z'):=\frac{1}{2\pi
i}\int_{\Gamma\backslash\Gamma_{\delta}}\frac{(\eta'_{k}-\eta'_{k}(s))\,\partial_{\gamma}\eta'_{j}}{\eta'_{j}-z'}\,dl.
\end{equation}
\smallskip

\noindent$\bf 2.$\,\,\,Let us estimate $|J'_{k,\delta}(z')|$. We
have
 $$
J'_{k,\delta}(z')=\frac{1}{2\pi
i}\int_{\Gamma_{\delta}}\frac{\eta'_{k}-\eta'_{k}(s)}{\eta'_{j}-\eta'_{j}(s)}\cdot\frac{\eta'_{j}-\eta'_{j}(s)}{\eta'_{j}-z'}\,\partial_{\gamma}\eta'_{j}\,dl.
 $$
Recall that
$\psi'={\zeta}^{'}\circ\eta'_{j}=(\eta'_{j}-z_{0}')/\partial_{\gamma}\eta'_{j}(a)$
and $s=[\psi^{'-1}_{1}\circ\,{\zeta}'_{1}](z')$. Hence,
\begin{align*}
\Big|\frac{\eta'_{j}(l)-\eta'_{j}(s)}{\eta'_{j}(l)-z'}\Big|=
\Big|\frac{[{\zeta}^{'-1}\circ\psi'](l)-[{\zeta}^{'-1}\circ\psi'](s)}{[{\zeta}^{'-1}\circ\psi'](l)-
z'}\Big|=\Big|\frac{\psi'(l)-\psi'(s)}{\psi'(l)-{\zeta}^{'}(z')}\Big|\leqslant \\
\leqslant
\Big|\frac{\psi'(l)-\psi'(s)}{\psi'_{1}(l)-\psi'_{1}(s)}\Big|\leqslant
1+
\Big|\frac{\psi'_{2}(l)-\psi'_{2}(s)}{\psi'_{1}(l)-\psi'_{1}(s)}\Big|=
1+\Big|\frac{\theta'(\tau_{l})-\theta'(\tau_{s})}{\tau_{l}-\tau_{s}}\Big|
\end{align*}
holds, where $\tau_{l}:=\psi'_{1}(l)$, $\tau_{s}:=\psi'_{1}(s)$,
and $\theta'=\psi'_{2}\circ{\psi'_1}^{\,-1}$. Hence, in view of
(\ref{thetaest}), we get
\begin{equation}
\label{trianglelemma}
\Big|\frac{\eta'_{j}(l)-\eta'_{j}(s)}{\eta'_{j}(l)-z'}\Big|\leqslant
1+c_{3}.
\end{equation}
Relations (\ref{partcurve}) and (\ref{curvestraight2}) imply
\begin{align*}
|\eta'_{j}(l)-\eta'_{j}(s)|\geqslant
|\partial_{\gamma}\eta'_{j}(a)|\Big|\Re\frac{\eta'_{j}(l)-
\eta'_{j}(s)}{\partial_{\gamma}\eta'_{j}(a)}\Big|=|\partial_{\gamma}\eta'_{j}(a)
|\Big|\Re\frac{\eta'_{j}(l)}{\partial_{\gamma}\eta'_{j}(a)}-
\Re\frac{\eta'_{j}(s)}{\partial_{\gamma}\eta'_{j}(a)}\Big|=\\
=|\partial_{\gamma}\eta'_{j}(a)|\,\Big|\int\limits_{s}^{l}
\Re\frac{\partial_{\tau}\eta'_{j}(\tau)}{\partial_{\gamma}\eta'_{j}(a)}\,d\tau\Big|\geqslant
\frac{|\partial_{\gamma}\eta_{j}(a)|\,c_{0}}{4}\,{\rm
dist}_{\Gamma}(s,l).
\end{align*}
Therefore, in view of (\ref{boundary curves estimate gen}),
\begin{equation}
\label{kviaj}
\Big|\frac{\eta'_{k}-\eta'_{k}(s)}{\eta'_{j}-\eta'_{j}(s)}\Big|\leqslant
\frac{4}{|\partial_{\gamma}\eta_{j}(a)|\,c_{0}}
\parallel\eta'_{k}\parallel_{C^{1}(\Gamma;\,\mathbb C)}\leqslant \frac{8}{|\partial_{\gamma}\eta_{j}(a)|\,c_{0}}\parallel\eta_{k}\parallel_{C^{1}(\Gamma;\,\mathbb C)}
\end{equation}
for sufficiently small $t$. Combining (\ref{trianglelemma}),
(\ref{kviaj}), and (\ref{boundary curves estimate gen}), we obtain
\begin{equation}
\label{badintest} |J'_{k,\delta}(z')|\leqslant c_{4}\,\delta\,.
\end{equation}
\smallskip

\noindent$\bf 3.$\,\,\, Let $J_{k,\delta}(k,z)$ and
$\tilde{J}_{k,\delta}(k,z)$ be defined by formula
(\ref{integargparts}) with omitted primes. The same arguments as
above show that we can omit primes in (\ref{argprincex}),
(\ref{badintest}). In particular,
\begin{equation}
\label{expproject}
{\pi}_{k}p'-{\pi}_{k}p=J'_{k,\delta}(z')-J_{k,\delta}(z)+\tilde{J}'_{k,\delta}(z')-\tilde{J}_{k,\delta}(z)+\eta'_{k}(s)-\eta_{k}(s).
\end{equation}
Fix an arbitrary $\varepsilon>0$ and take
$\delta=\delta(\varepsilon)\in(0,\varepsilon/6c_{4})$, then
(\ref{badintest}) yields
$|J'_{k,\delta}(z')|+|J_{k,\delta}(z)|\leqslant\varepsilon/3 $.
Now, let us estimate
$|\tilde{J}'_{k,\delta}(z')-\tilde{J}_{k,\delta}(z)|$. We have
\begin{equation}
\label{dfghjhj}
|\tilde{J}'_{k,\delta}(z')-\tilde{J}_{k,\delta}(z)|=\Big|\int_{\Gamma\backslash\Gamma_{\delta}}
\frac{(\eta'_{k}-\eta'_{k}(s))\,(\eta_{j}-z)\,\partial_{\gamma}\eta'_{j}-
(\eta_{k}-\eta_{k}(s))\,(\eta'_{j}-z')\,\partial_{\gamma}\eta_{j}}{2\pi(\eta'_{j}-z')(\eta_{j}-z)}\,\,dl\,\Big|.
\end{equation}
Recall that ${\partial\psi'_{1}/\partial l}\in [c_{0}/2,2/c_{0}]$
for $l\in \Gamma_{a}$ and
$|\partial_{\gamma}\eta'_{j}(a)|\geqslant
|\partial_{\gamma}\eta_{j}(a)|{/2}$. Hence,
\begin{align*}
|\eta'_{j}(l)-z'|=|{\zeta}^{'-1}\circ\psi'(l)-z'|=|\partial_{\gamma}\eta'_{j}(a)|\,|\psi'(l)-
{\zeta}'(z')|\geqslant |\partial_{\gamma}\eta'_{j}(a)||\psi'_{1}(l)-{\zeta}'_{1}(z')|=\\
=|\partial_{\gamma}\eta'_{j}(a)||\psi'_{1}(l)-\psi'_{1}(s)|\geqslant
|\partial_{\gamma}
\eta'_{j}(a)|\,\Big|\int_{s}^{l}\partial_{\tau}\psi'_{1}(\tau)\,d\tau\Big|\geqslant
\frac{c_{0}}{4}\,|\partial_{\gamma}\eta_{j}(a)|\,\delta.
\end{align*}
Similarly, we obtain $|{\eta_{j}(l)-z}|\geqslant
c_{0}\,|\partial_{\gamma}\eta_{j}(a)|\,\delta/4$. Thus, the
denominator in (\ref{dfghjhj}) does not become small. Also, the
$|z-z'|=|\,[{\omega\circ {\omega'}^{\,-1}}]\,(z')-z'|$ obeys
(\ref{coordclose}). So, the same arguments as used in the proof of
(\ref{boundary integral estimates}) show that
\begin{equation}
\label{secintest} \sup_{M'\in\mathbb{M}_{t}}\sup_{p'\in {\mE}'
(M'\backslash\Gamma)\cap
{\Pi'}_{a}}|\tilde{J}'_{k,\delta}(z')-\tilde{J}_{k,\delta}(z)|\underset{t\to
0}\longrightarrow 0.
\end{equation}
Choose $t(\varepsilon)>0$ such that the left-hand sides of
(\ref{secintest}) and (\ref{boundary curves estimate gen}) are
less than $\varepsilon/3$ for any $t\in(0,t(\varepsilon))$. Now,
(\ref{expproject}) implies that
$|{\pi}_{k}p'-{\pi}_{k}p|<\varepsilon$ for any $t\in(0,t_{0})$,
$M'\in\mathbb{M}_{t}$, and $p'\in
{\mE}'(M'\backslash\Gamma)\cap{\Pi'}_{a}$. Since $\varepsilon>0$
and $k=1,\dots,n$ are arbitrary, formula
(\ref{proximitineartboundary}) is proved.
\smallskip

\noindent$\bf 4.$\,\,\, Differentiating (\ref{argprincex}), we obtain
\begin{equation*}
\partial_{z'}w_{k}'(x')=\frac{1}{2\pi
i}\int\limits_{\Gamma}\frac{\eta'_{k}\,\partial_{\gamma}\eta'_{j}}{(\eta'_{j}-z')^{2}}\,dl=\mathfrak{J}'_{k,\delta}(z')+\tilde{\mathfrak{J}}'_{k,\delta}(z')+\mathfrak{K}'_{k}(z')+\tilde{\mathfrak{K}}'_{k}(z'),
\end{equation*}
where
\begin{equation}
\label{argprincex diff}
\begin{split}
\mathfrak{J}'_{k,\delta}(z'):&=\frac{1}{2\pi i}\int\limits_{\Gamma_{\delta}}\frac{\big(\eta'_{k}(l)-\eta'_{k}(s)-\partial_{\gamma}\eta'_{k}(s)\partial_{\gamma}\eta'_{j}(s)^{-1}(\eta'_{j}(l)-\eta'_{j}(s)\big)\partial_{\gamma}\eta'_{j}}{(\eta'_{j}-z')^{2}}dl,\\
\tilde{\mathfrak{J}}'_{k,\delta}(z'):&=\frac{1}{2\pi i}\int\limits_{\Gamma\backslash\Gamma_{\delta}}\frac{\big(\eta'_{k}(l)-\eta'_{k}(s)-\partial_{\gamma}\eta'_{k}(s)\partial_{\gamma}\eta'_{j}(s)^{-1}(\eta'_{j}(l)-\eta'_{j}(s)\big)\partial_{\gamma}\eta'_{j}}{(\eta'_{j}-z')^{2}}dl,\\
\mathfrak{K}'_{k}(z'):&=\frac{\eta'_{k}(s)}{2\pi i}\int\limits_{\Gamma}\frac{\partial_{\gamma}\eta'_{j}}{(\eta'_{j}-z')^{2}}dl=\frac{\eta'_{k}(s)}{2\pi i}\int\limits_{\eta_{j}(\Gamma)}\frac{dz''}{(z''-z')^{2}}=0\\
\tilde{\mathfrak{K}}'_{k}(z'):&=\frac{\partial_{\gamma}\eta'_{k}(s)}{\partial_{\gamma}\eta'_{j}(s)}\frac{1}{2\pi i}\int\limits_{\Gamma}\frac{(\eta'_{j}(l)-\eta'_{j}(s))\partial_{\gamma}\eta'_{j}}{(\eta'_{j}-z')^{2}}dl=\frac{\partial_{\gamma}\eta'_{k}(s)}{\partial_{\gamma}\eta'_{j}(s)}\Big(\frac{1}{2\pi i}\int\limits_{\Gamma}\frac{(\eta'_{j}(l)-z')\partial_{\gamma}\eta'_{j}}{(\eta'_{j}-z')^{2}}dl+\\
+&\frac{z'-\eta'_{j}(s)}{2\pi i}\int\limits_{\Gamma}\frac{\partial_{\gamma}\eta'_{j}}{(\eta'_{j}-z')^{2}}dl\Big)=\frac{\partial_{\gamma}\eta'_{k}(s)}{\partial_{\gamma}\eta'_{j}(s)}
\end{split}
\end{equation}
Similarly,
\begin{equation*}
\partial_{z}w_{k}(x)=\frac{1}{2\pi
i}\int\limits_{\Gamma}\frac{\eta_{k}\,\partial_{\gamma}\eta_{j}}{(\eta_{j}-z)^{2}}\,dl=\mathfrak{J}_{k,\delta}(z)+\tilde{\mathfrak{J}}_{k,\delta}(z)+\mathfrak{K}_{k}(z)+\tilde{\mathfrak{K}}_{k}(z),
\end{equation*}
where the summands on the right are defined by (\ref{argprincex diff}) with omitted primes. In view of (\ref{boundary curves estimate gen}) and (\ref{partcurve}),
$$\mathfrak{F}_{k}(s,l):=\big|\eta'_{k}(l)-\eta'_{k}(s)-\frac{\partial_{\gamma}\eta'_{k}(s)}{\partial_{\gamma}\eta'_{j}(s)}\big(\eta'_{j}(l)-\eta'_{j}(s)\big)\big|\le c\sum_{p=1}^{n}\parallel\eta'_{p}\parallel_{C^{2}(\Gamma;\mathbb{C})}({\rm dist}_{\Gamma}\{s,l\})^{2}\le ct({\rm dist}_{\Gamma}\{s,l\})^{2}.$$
Hence, formulas (\ref{trianglelemma}) and (\ref{boundary curves estimate gen}) imply
\begin{align}
\label{ergeger}
\mathfrak{J}'_{k,\delta}(z'):=\frac{1}{2\pi}\int\limits_{\Gamma_{\delta}}\Big|\frac{\mathfrak{F}_{k}(s,l)}{(\eta'_{j}(l)-\eta'_{j}(s))^{2}}\Big|\Big|\frac{(\eta'_{j}(l)-\eta'_{j}(s))}{(\eta'_{j}-z')}\Big|^{2} |\partial_{\gamma}\eta'_{j}(l)|d\gamma\underset{\delta\to
0}\longrightarrow 0.
\end{align}
uniformly with respect to $M'\in\mathbb{M}_{t}$ and $x'\in M'$ such that $\mE'(x')\in {\mE}'(M'\backslash\Gamma)\cap{\Pi'}_{a}$. Note that (\ref{ergeger}) remains valid with omitted primes. For any fixed $\delta>0$, the formula
\begin{equation*}
\sup_{M'\in\mathbb{M}_{t}}\sup_{p'\in {\mE}'
(M'\backslash\Gamma)\cap
{\Pi'}_{a}}|\tilde{\mathfrak{J}}'_{k,\delta}(z')-\tilde{\mathfrak{J}}_{k,\delta}(z)|\underset{t\to
0}\longrightarrow 0
\end{equation*}
is valid and it is obtained by repeating the arguments leading to (\ref{secintest}). Also, for any fixed $\delta>0$, the formula
\begin{equation*}
\sup_{M'\in\mathbb{M}_{t}}\sup_{p'\in {\mE}'
(M'\backslash\Gamma)\cap
{\Pi'}_{a}}|\tilde{\mathfrak{K}}'_{k}(z')-\tilde{\mathfrak{K}}_{k}(z)|\underset{t\to
0}\longrightarrow 0
\end{equation*}
follows from (\ref{boundary curves estimate gen}) and (\ref{curvestraight1}). Thus,
$$|\partial_{z'}w_{k}'(x')-\partial_{z}w_{k}(x)|\underset{t\to 0}\longrightarrow 0,$$
uniformly with respect to $M'\in\mathbb{M}_{t}$, and any $x'\in M'$ such that $\mE'(x')\in {\mE}'(M'\backslash\Gamma)\cap{\Pi'}_{a}$. Now, arguing as in the proof of Lemma \ref{L2}, ii, we prove the statement ii, Lemma \ref{L last}.
\end{proof}

\subsubsection*{Completing the proof of Theorem \ref{MT}}

\noindent$\bullet$\,\,\,As a corollary of (\ref{proximitineartboundary}), we
obtain
 \begin{equation}\label{proximitineartboundary1}
\sup_{M'\in\mathbb{M}_{t}}\Big(\sup_{p'\in
{\mE}'(M')\cap{\Pi'}_{a}} {\rm
dist}_{\mathbb{C}^{n}}(p',{\mE'}(M)\cap{{\Pi}_{a}})\Big)\underset{t\to
0}\longrightarrow 0.
 \end{equation}
for any $a\in\Gamma$. The cylinders
${\Pi'}_{a}\backslash\partial{\Pi'}_{a}$, $a\in\Gamma$ constitute
an open cover of the compact set ${\mE}(\Gamma)$ in
$\mathbb{C}^{n}$. Choose a finite subcover
${\Pi'}_{a_{i}}\backslash\partial{\Pi'}_{a_{i}}$, $i=1,\dots,N$
and denote
{$\tilde{\mathtt{Q}'}:=\bigcup_{i=1}^{N}{\Pi'}_{a_{i}}\backslash\partial{\Pi'}_{a_{i}}$
and $\tilde{\mathtt{Q}}:=\bigcup_{i=1}^{N}{\Pi}_{a_{i}}$}. Then
(\ref{proximitineartboundary1}) implies {
\begin{equation}\label{nearboundary convergence}
\sup_{M'\in\mathbb{M}_{t}}r_{[{\mE}(M)\cap\overline{\tilde{\mathtt{Q}}}\,]\,\,[{\mE}'(M')\cap\overline{\tilde{\mathtt{Q}'}}\,]}=
\sup_{M'\in\mathbb{M}_{t}}\Big(\sup_{p'\in
{\mE}'(M')\cap\tilde{\mathtt{Q}'}}{\rm
dist}_{\mathbb{C}^{n}}(p',{\mE}(M)\cap\tilde{\mathtt{Q}})\Big)\underset{t\to
0}\longrightarrow 0.
 \end{equation}}
Note that {${\mE}(M)\backslash\tilde{\mathtt{Q}}'$} is a compact
subset of ${\mE}(M\backslash\Gamma)$; as shown above, it is
contained in $\mathtt{Q}_{\varepsilon}$ for sufficiently small
$\varepsilon>0$. Denote
$\mathtt{Q}=\mathtt{Q}_{\varepsilon}\cup\tilde{\mathtt{Q}}$, then
${\mE}(M)\subset \mathtt{Q}$. Formulas (\ref{main result on
compacts}) and (\ref{nearboundary convergence}) yield
\begin{equation}
\label{convergence 1}
\sup_{M'\in\mathbb{M}_{t}}r_{{\mE}(M)\,\,[{\mE}'(M')\cap\overline{\mathtt{Q}}\,]}\underset{t\to
0}\longrightarrow 0.
\end{equation}
Denote the $\varepsilon-$neighborhood of ${\mE}(M)$ by
$\mathcal{Q}_{\varepsilon}$. Let $\varepsilon>0$ be sufficiently
small for $\overline{\mathcal{Q}_{\varepsilon}}$ to be contained
in $\mathtt{Q}$. Due to (\ref{convergence 1}), there is
$t(\varepsilon)>0$ such that
${\mE}'(M')\cap\overline{\mathtt{Q}}\subset\mathcal{Q}_{\varepsilon}$
for all $t\in (0,t(\varepsilon))$ and $M'\in\mathbb{M}_{t}$. Let
us show that ${\mE}'(M')\subset\mathtt{Q}$ holds for all
$t\in (0,t(\varepsilon)]$ and $M'\in\mathbb{M}_{t}$. Assume
the opposite, then there exists $M'\in\mathbb{M}_{t(\varepsilon)}$
and $p'\in {\mE}'(M')\backslash\mathtt{Q}$. Since ${\mE}'(M')$ is
connected,  there exists a path $\mathcal{L}$ in ${\mE}'(M')$ with
the beginning at $p'$ and the end at some point $p''\in
{\mE}'(\Gamma)\subset{\mE}'(M')\cap\mathtt{Q}\subset\mathcal{Q}_{\varepsilon}$.
Hence, the set $\mathcal{L}\cap
(\mathtt{Q}\backslash\mathcal{Q}_{\varepsilon})$ is nonempty.
However,
$\mathcal{L}\cap(\mathtt{Q}\backslash\mathcal{Q}_{\varepsilon})
\subset{\mE}'(M')\cap(\mathtt{Q}\backslash\mathcal{Q}_{\varepsilon})=\varnothing$
is valid. Thus, our assumption has led to a contradiction.
Therefore ${\mE}'(M')\subset\mathtt{Q}$ holds for
sufficiently small $t$ and formula (\ref{convergence 1}) can be
rewritten as
\begin{equation*}
\sup_{M'\in\mathbb{M}_{t}}r_{{\mE}(M)\,{\mE}'(M')}\underset{t\to
0}\longrightarrow 0.
\end{equation*}
The latter relation along with (\ref{main result 1st part}) and
definition (\ref{Eq def d Hauss}) imply (\ref{main result}) for
any projective immersion $\mE$.  Thereby, as shown in the
beginning of Section \ref{Sec2}, formula (\ref{main result}) is
proved for any immersion $\mE$.

\noindent$\bullet$\,\,\, To complete the proof of (\ref{supinf
main}), it remains to show that the extension
$\mE':M'\mapsto\mathbb\mathbb{C}^{n}$ of the immersion ${\mE}:\
M\mapsto \mathbb{C}^{n}$ is an immersion for sufficiently small
$t>0$ and any $M'\in\mathbb{M}_{t}$. If ${\mE}$ is projective,
then the statement is obvious and $\mE'$ is also projective for
small $t$. If ${\mE}:\ M\mapsto \mathbb{C}^{n}$ is not projective,
then it can be completed to a projective immersion. In this case,
the statement follows from from Lemma \ref{L2}, ii, and Lemma
\ref{L last}, ii.

{\it

Theorem \ref{MT} is proved.}

\smallskip

$\bullet \ $ Perhaps, relation (\ref{main result}) may be
improved to an estimate $d_{H}(\mE'(M'),\mE(M))\underset{t\to 0}=
O(t^\alpha)$ with a positive $\alpha$. However, it requires more
subtle considerations `near boundary'.

\subsection*{Statements and Declarations}
\paragraph{Competing Interests.} On behalf of all authors, the corresponding author states that there is no conflict of interest.

\paragraph{Data Availibility Statement.} Data sharing not applicable to this article as no datasets were generated oranalysed during the current study.

\paragraph{Funding.} M. I. Belishev and D. V. Korikov were supported by the RFBR grant
20-01-00627-a.

\end{document}